\documentclass[12pt]{amsart}
\usepackage{amssymb}
\usepackage{hyperref}
\usepackage{enumerate}
\usepackage{verbatim}
\usepackage{graphicx}
\usepackage{color}
\usepackage{bbm}
\usepackage{amscd}
\usepackage{footnote}   
\makesavenoteenv{tabular}

\newcommand{\ra}{\rightarrow}
\newcommand{\xra}{\xrightarrow}
\newcommand{\xla}{\xleftarrow}
\newcommand{\RR}{\mathbb{R}}

\newcommand{\mr}{\mathrm}
\newcommand{\dd}{\partial}
\newcommand{\bt}{\bullet}
\newcommand{\til}{\widetilde}
\newcommand{\wh}{\widehat}
\newcommand{\mc}{\mathcal}

\newcommand{\F}{\mathcal{F}}

\newcommand{\FF}{\mathcal{F}}
\newcommand{\hra}{\hookrightarrow}
\newcommand{\bl}{\color{blue}}
\newcommand{\LL}{\mathcal{L}}
\newcommand{\p}{\mathsf{p}}
\newcommand{\ul}{\underline}
\newcommand{\N}{\mathsf{N}}
\newcommand{\HH}{\mathsf{H}}
\newcommand{\G}{\mathbb{G}}
\newcommand{\HHH}{\hat{H}}

\theoremstyle{remark}
\newtheorem{remark}{Remark}[section]
\theoremstyle{plain}
\newtheorem{lemma}[remark]{Lemma}

\newtheorem{thm}[remark]{Theorem}

\theoremstyle{definition}
\newtheorem{definition}[remark]{Definition}

\newtheorem{conjecture}[remark]{Conjecture}

\begin{document}
\title[Fiber products in BV-BFV]{A note on gluing via fiber products in the (classical) BV-BFV formalism
}
\begin{abstract}
In classical field theory, gluing spacetime manifolds along boundary corresponds to taking a fiber product of the corresponding spaces of fields (as differential graded Fr\'echet manifolds) up to homotopy. We construct this homotopy explicitly in several examples in the setting of BV-BFV formalism (Batalin--Vilkovisky formalism with cutting--gluing).
\end{abstract}

\author{Alberto S. Cattaneo}
\address{Institut f\"ur Mathematik, Universit\"at Z\"urich\\
Winterthurerstrasse 190, CH-8057 Z\"urich, Switzerland}  
\email{cattaneo@math.uzh.ch}

 \thanks{A. S. C. acknowledges partial support of SNF Grant No. 200020 192080 and of the Simons Collaboration on Global Categorical Symmetries. This research was (partly) supported by the NCCR SwissMAP, funded by the Swiss National Science Foundation.}

\author{Pavel Mnev}

\address{University of Notre Dame}

\address{St. Petersburg Department of V. A. Steklov Institute of Mathematics of the Russian Academy of Sciences}
\email{pmnev@nd.edu}

\maketitle


\setcounter{tocdepth}{3}
\tableofcontents

\subsection*{Acknowledgments}
We thank Eugene Rabinovich for sharing some ideas related to this work. We are also very much indebted to Domenico Fiorenza and Bruno Vallette for helpful comments and for suggesting (independently) the construction sketched in Section \ref{ss: sketch of proof of Conj 2.2}.

\section{Preliminary definitions: weak equivalences of dg symplectic manifolds}

\subsection{Differential geometric setting}
We start by fixing some definitions. First we recall a (standard) notion of a dg symplectic  manifold (everywhere in the paper ``dg'' stands for ``differential graded''). 
  The subsequent definition of weak equivalence is not (to our knowledge) a part of the standard lore of dg geometry but has been used in the BV-BFV formalism \cite[Definition 2.6.3]{CSS}.
\begin{definition}\label{def: k-symp mfd}
A $k$-symplectic
manifold\footnote{A word of warning: in the literature, the term ``$k$-symplectic'' is sometimes used to indicate that instead of a symplectic $2$-form, one has a closed $k$-form subject to suitable nondegeneracy condition. Here we do not mean that: $k$ for us is the intrinsic (cohomological) degree of the nondegenerate closed $2$-form.} (a more full name is ``degree $k$ dg symplectic manifold'') is a triple $(\F,Q,\omega)$ consisting of a $\mathbb{Z}$-graded supermanifold $\F$ equipped with a degree $+1$ cohomological vector field $Q$ and a degree $k$ symplectic form $\omega\in \Omega^2(\F)_k$ satisfying the compatibility condition 
\begin{equation}\label{def of k-symp mfd: L_Q omega=0}
\LL_Q \omega=0.
\end{equation}

If a $k$-symplectic manifold $(\F,Q,\omega)$ is additionally equipped with a Hamiltonian function $S\in C^\infty(\F)_{k+1}$ for $Q$, i.e., one has $\iota_Q\omega=\delta S$,\footnote{
We use the notation $\delta$ for de Rham differential on $\F$ (which will later be the space of fields in a field theory), and we reserve $d$ for the  de Rham differential on the underlying spacetime manifold $M$ of that field theory. 
} we call the quadruple $(\F,Q,\omega,S)$ a $k$-Hamiltonian manifold.
\end{definition}

The case relevant for the rest of the paper will be $k=-1$ -- the case of the Batalin--Vilkovisky formalism, where $\FF$ will be the space of fields of a field theory, $Q$ the BRST operator and $S$ the action functional. 

A note on terminology: $(-1)$-Hamiltonian manifolds are also known as BV manifolds and $0$-Hamiltonian manifolds -- as BFV manifolds (they arise in Batalin-Fradkin-Vilkovisky construction for homological resolution of coisotropic reductions). Furthermore, $(k-1)$-Hamiltonian manifolds are elsewhere also called BF$^{k}$V manifolds.

\begin{definition}\label{def: WE manifolds}
A weak equivalence of two $k$-symplectic manifolds $(\F,Q,\omega)$
and $(\til\F,\til Q,\til \omega)$ is a pair of dg maps $f\colon \F\ra \til \F$, $g\colon \til \F\ra \F$ such that
\begin{enumerate}[(a)]
\item \label{def: WE manifolds (a)} The pullbacks $f^*\colon C^\infty(\til\F)\ra C^\infty(\F)$ and $g^*\colon C^\infty(\F)\ra C^\infty(\til\F)$ are quasi-isomorphisms of chain complexes, inducing mutually inverse maps in cohomology.
\item \label{def: WE manifolds (b)} One has $f^*\til\omega=\omega+\LL_Q \beta$ and $g^*\omega=\til\omega+\LL_{\til Q} \til\beta$
for some closed 2-forms $\beta\in \Omega^2(\F)_{k-1}$, $\til\beta\in \Omega^2(\til\F)_{k-1}$.
\end{enumerate}
\end{definition}

\begin{lemma} 
\label{lemma 1.3}
If two $k$-Hamiltonian manifolds $(\F,Q,\omega,S)$ and $(\til\F,\til Q,\til \omega, \til S)$ are weakly equivalent as $k$-symplectic manifolds, then the corresponding Hamiltonian functions are related by 
\begin{eqnarray}
f^* \til S &=& S-\frac12 \iota_Q\iota_Q \beta , \label{WE for S 1} \\
g^* S &=& \til S -\frac12 \iota_{\til Q} \iota_{\til Q} \til \beta , \label{WE for S 2}
\end{eqnarray}
where the equalities are modulo locally constant functions if $k=-1$ and on the nose for $k\neq -1$.
\end{lemma}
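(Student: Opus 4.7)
The plan is to verify the first identity $f^*\til S = S - \tfrac12\iota_Q\iota_Q\beta$ by applying $\delta$ to both sides and matching them; the second identity then follows by interchanging the roles of the two Hamiltonian manifolds.

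Since $f$ is a dg map, $f^*$ commutes with $\delta$ and satisfies $f^*\iota_{\til Q} = \iota_Q f^*$. Combining these with the Hamiltonian conditions $\iota_Q\omega = \delta S$, $\iota_{\til Q}\til\omega = \delta\til S$ and the hypothesis $f^*\til\omega = \omega + \LL_Q\beta$ yields
$$\delta f^*\til S \;=\; f^*\iota_{\til Q}\til\omega \;=\; \iota_Q(\omega + \LL_Q\beta) \;=\; \delta S + \iota_Q\LL_Q\beta.$$
The target identity therefore reduces to the Cartan-calculus statement
$$\iota_Q\LL_Q\beta \;=\; -\tfrac12\,\delta\iota_Q\iota_Q\beta.$$

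This is the core calculation. I would derive it from three inputs: (i) Cartan's formula for the odd vector field $Q$, $\LL_Q = \delta\iota_Q + \iota_Q\delta$; (ii) the fact that $\beta$ is closed; and (iii) the anticommutation $\LL_Q\iota_Q + \iota_Q\LL_Q = 0$, which is the graded commutator identity $[\LL_Q,\iota_Q] = \iota_{[Q,Q]}$ applied to two odd operators with $[Q,Q] = 2Q^2 = 0$. Combining (i) with (ii) gives $\iota_Q\LL_Q\beta = \iota_Q\delta\iota_Q\beta$; applying (i) to the $1$-form $\iota_Q\beta$ and invoking (iii) then produces $\delta\iota_Q\iota_Q\beta = -2\iota_Q\delta\iota_Q\beta$, from which the identity follows.

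Finally, I would control the integration constant. The computation above gives $\delta(f^*\til S - S + \tfrac12\iota_Q\iota_Q\beta) = 0$, so the difference is a locally constant function; its internal degree matches that of $S$, namely $k+1$. Since locally constant functions exist only in internal degree $0$, the identity holds on the nose whenever $k \neq -1$; for $k = -1$ the discrepancy is a genuine additive constant, the familiar ambiguity in a BV action. The second identity $g^*S = \til S - \tfrac12\iota_{\til Q}\iota_{\til Q}\til\beta$ follows verbatim by swapping the roles of the two manifolds. The main obstacle is the super-sign bookkeeping needed to land on the coefficient $-\tfrac12$ rather than $\pm 1$ or $0$.
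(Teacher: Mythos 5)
Your proposal is correct and follows essentially the same route as the paper's proof: compute $\iota_Q f^*\til\omega$ two ways, reduce to the Cartan-calculus identity $\iota_Q\LL_Q\beta=-\tfrac12\,\delta\iota_Q\iota_Q\beta$ for the closed form $\beta$, and dispose of the integration constant by the degree count $|S|=k+1$. The only (immaterial) difference is that you treat $\iota_Q$ as an odd operator anticommuting with $\LL_Q$, whereas the paper treats it as even and commuting with $\LL_Q$; the two sign flips cancel and both conventions land on the same coefficient $-\tfrac12$.
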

Lemma \ref{lemma 1.3} is essentially Proposition 2.6.5 of \cite{CSS} in the non lax case.
\begin{proof}
Consider the expression $\iota_Q f^*\til\omega$. On the one hand, using chain map property of $f^*$ (or dg map property of $f$), we have
\begin{equation}\label{lemma 1.3 eq 3}
\iota_Q f^*\til\omega= f^*\iota_{\til Q}\til\omega=f^*\delta\til S=\delta f^* \til S  .
\end{equation}
On the other hand, we have
\begin{equation}\label{lemma 1.3 eq 4}
\iota_Q f^* \til\omega =\iota_Q (\omega+\LL_Q \beta)  .
\end{equation}
Using the fact that $\beta$ is a closed 2-form, we have $\iota_Q \LL_Q \beta = -\iota_Q \delta \iota_Q \beta$ which, compared with $\iota_Q \LL_Q \beta = \LL_Q \iota_Q \beta = \iota_Q\delta\iota_Q\beta-\delta \iota_Q\iota_Q \beta$, implies $\iota_Q \LL_Q \beta=-\frac12 \delta \iota_Q \iota_Q \beta$. Thus, continuing (\ref{lemma 1.3 eq 4}), we have
\begin{equation}
\iota_Q f^*\til \omega=\delta (S-\frac12 \iota_Q\iota_Q \beta) .
\end{equation}
Compared with (\ref{lemma 1.3 eq 3}), this implies that functions $f^*\til S$ and $S-\frac12 \iota_Q\iota_Q \beta$ on $\F$ have the same de Rham differential. Thus, they coincide up to a shift by a locally constant function. Since constants are concentrated in degree zero, we have equality (\ref{WE for S 1}) on the nose for $k\neq -1$ and up to a constant for $k=-1$. The proof of (\ref{WE for S 2}) is similar.
\end{proof}

\subsection{Relative version}
The following relative version of Definitions \ref{def: k-symp mfd}, \ref{def: WE manifolds} will be useful for us.
\begin{definition}\label{def: relative symp and Ham mfd}
Let  $(\F',Q',\omega')$ be a $(k+1)$-symplectic manifold and let $(\F,Q)$ be a dg manifold, $\pi\colon \F\ra \F'$ a dg map and $\omega$ a degree $k$ symplectic form on $\F$ which instead of (\ref{def of k-symp mfd: L_Q omega=0}) satisfies 
$$\LL_Q \omega=\pi^* \omega'.  $$
Then we say that $(\F,Q,\omega,\pi)$ is a $k$-symplectic manifold \emph{relative} to the $(k+1)$-symplectic manifold $(\F',Q',\omega')$.

If additionally one has $\omega'=\delta\alpha'$ (the symplectic form on $\F'$ is exact), $Q'$ has a Hamiltonian function $S'\in C^\infty(\F')_{k+2}$ and $\F$ is equipped with a function $S\in C^\infty(\F)_{k+1}$ satisfying
\begin{equation}
\iota_Q \omega=\delta S-\pi^* \alpha' ,
\end{equation}
then we say that $(\F,Q,\omega,S,\pi)$ is a $k$-Hamiltonian manifold relative to the exact $(k+1)$-Hamiltonian manifold $(\F',Q',\omega'=\delta\alpha',S')$.
\end{definition}

\begin{definition}
Let  $(\F,Q,\omega,\pi)$ and $(\til\F,\til Q,\til \omega,\til \pi)$ be two $k$-symplectic manifolds, both relative to the same $(k+1)$-symplectic manifold $(\F',Q',\omega')$. We define the weak equivalence between $(\F,Q,\omega,\pi)$ and $(\til\F,\til Q,\til \omega,\til \pi)$ as in the nonrelative case (Definition \ref{def: WE manifolds}), requiring additionally that $\til\pi\circ f=\pi$ and $\pi\circ g=\til\pi$.
\end{definition}

We remark that Lemma \ref{lemma 1.3} works in the relative case without any change.

\subsection{Linear algebra setting}
Here is a version Definitions \ref{def: k-symp mfd}, \ref{def: WE manifolds} adapted to the linear algebra case.
\begin{definition}
We call a \emph{degree $k$ Poincar\'e cochain complex} (or just \emph{$k$-Poincar\'e complex}) a cochain complex $\FF^\bullet$ over $\mathbb{R}$ with differential $d_Q$, equipped with a graded skew-symmetric non-degenerate pairing $\omega\colon \bigoplus_{i} \FF^i\otimes \FF^{-i-k} \ra \mathbb{R}$ satisfying
\begin{equation}\label{def of Poincare complex: omega(dx,y)=omega(x,dy)}
\omega(d_Q x,y)= (-1)^{|x|} \omega(x,d_Q y)
\end{equation}
for any homogeneous elements $x,y\in \FF$.
\end{definition}

A $k$-Poincar\'e complex can be seen as a $k$-symplectic  manifold with $Q=(d_Q)^*$ -- the pullback by $d_Q$ (in particular, $Q$ is a \emph{linear} cohomological vector field on $\F^\bt$) and with a constant $k$-symplectic form $\omega$. It also has a quadratic Hamiltonian function for $Q$, 
\begin{equation}
S(x)=\frac12 \omega(d_Q x,x) .
\end{equation}

\begin{definition}\label{def: WE linear}
A weak equivalence between two $k$-Poincar\'e complexes $(\FF,d_Q,\omega)$,  $(\til\FF,\til d_Q,\til \omega)$ is the following set of data:
\begin{enumerate}[(i)]
\item A pair of chain maps of complexes
\begin{equation}\label{f,g lin case}
f\colon \FF\ra \til\FF,\quad g\colon \til\FF\ra \FF 
\end{equation}
\item A pair of maps $H\colon \FF^\bt\ra \FF^{\bt-1}$, $\til H\colon \til\FF^\bt \ra \til\FF^{\bt-1}$ satisfying the chain homotopy property:
$$ d_Q H+H d_Q =\mr{id}-g\, f,\quad \til d_Q \til H + \til H \til d_Q=\mr{id}-f\, g $$
(thus, $f,g$ are quasi-isomorphisms inducing mutually inverse maps in cohomology).
\item A degree $k-1$ skew-symmetric bilinear form $\beta$ on $\FF$ and a degree $k-1$ skew-symmetric bilinear form $\til\beta$ on $\til\FF$
such that
\begin{eqnarray}
\label{f^* omega'=omega + L_Q alpha }
\til\omega(f(x),f(y))&=&\omega(x,y)+\\
\nonumber
&& +\underbrace{(-1)^{k} (\beta(d_Q x,y)+(-1)^{|x|+1} \beta(x,d_Q y))}_{(\LL_Q \beta)(x,y)},
\\
\label{g^* omega = omega' + L_Q alpha'}
\omega(g(\til x),g(\til y))&=&\til\omega(\til x,\til y)+\\
\nonumber
 &&+\underbrace{(-1)^{k} (\til\beta(\til d_Q \til x,\til y)+(-1)^{|\til x|+1} \til \beta(\til x,\til d_Q \til y))}_{(\LL_{\til Q}\til \beta)(\til x,\til y)}.
\end{eqnarray}
Here $x,y$ are test elements in $\FF$ and $\til x,\til y$ are in $\til\FF$. 
\end{enumerate}
\end{definition}



\begin{lemma}\label{lemma 1.5}
Let $(\FF,d_Q,\omega)$,  $(\til \FF,\til d_Q,\til \omega)$ be two $k$-Poincar\'e complexes.
Assume that we have a pair of quasi-inverse quasi-isomorphisms
${\FF \underset{g}{\stackrel{f}{\rightleftarrows}} \til\FF}$
with chain homotopies $H,\til{H}$ and assume additionally that $f^*\til\omega=\omega$. Then $(\FF,d_Q,\omega)$ and  $(\til\FF,\til d_Q,\til \omega)$ are weakly equivalent Poincar\'e complexes with $\beta=0$ and with $\til\beta$ defined in terms of the data $(\til\omega,\til d_Q,\til H)$ by 
\begin{multline}\label{alpha from Lemma 1.5}
\til \beta(\til x,\til y) = (-1)^{k+1}\Big(\left(\til \omega(\til H\til x,\til y)+(-1)^{|\til x|+1} \til\omega(\til x,\til H \til y)\right)+\\
+\frac12\left(-\til\omega(\til H \til x,\til H \til d_Q \til y) +(-1)^{|\til x|} \til \omega(\til H \til d_Q \til x, \til H\til y)\right)- \til \omega(\til H\til x,\til d_Q \til H\til y)\Big) .
\end{multline}
\end{lemma}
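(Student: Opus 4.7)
The hypothesis provides conditions (i) and (ii) of Definition \ref{def: WE linear} directly. Setting $\beta=0$ reduces (\ref{f^* omega'=omega + L_Q alpha }) to the assumption $f^*\til\omega=\omega$. Skew-symmetry of each of the three groups of summands in the proposed $\til\beta$ is immediate from the skew-symmetry of $\til\omega$, and the total degree is $k-1$ because $\til H$ has degree $-1$ and $\til\omega$ has degree $k$. What remains is to verify (\ref{g^* omega = omega' + L_Q alpha'}).

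The plan is to transport $g$ through the identity $f^*\til\omega=\omega$, writing
\[
\omega(g\til x, g\til y)=\til\omega(fg\,\til x, fg\,\til y),
\]
and then substitute $fg=\mr{id}-\til d_Q\til H-\til H\til d_Q$. Bilinear expansion produces $\til\omega(\til x,\til y)$ together with eight correction terms: four linear in $\til H$ and four quadratic in $\til H$. One of the quadratic terms, $\til\omega(\til d_Q\til H\til x,\til d_Q\til H\til y)$, vanishes by moving one $\til d_Q$ across via the Poincar\'e identity $\til\omega(\til d_Q a,b)=(-1)^{|a|}\til\omega(a,\til d_Q b)$ and using $\til d_Q^{\,2}=0$.

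Next I would compute $(\LL_{\til Q}\til\beta)(\til x,\til y)=(-1)^{k}[\til\beta(\til d_Q\til x,\til y)+(-1)^{|\til x|+1}\til\beta(\til x,\til d_Q\til y)]$ by splitting $\til\beta$ in (\ref{alpha from Lemma 1.5}) into the three pieces appearing on its successive lines, call them $A$, $B$, $C$, and evaluating each separately. Using the Poincar\'e identity freely and $\til d_Q^{\,2}=0$ on terms where $\til d_Q$ ends up adjacent to itself, the $A$-piece produces exactly the four linear-in-$\til H$ correction terms, the $C$-piece produces the two ``asymmetric'' quadratic terms $\til\omega(\til d_Q\til H\til x,\til H\til d_Q\til y)$ and $\til\omega(\til H\til d_Q\til x,\til d_Q\til H\til y)$, and the $B$-piece produces $\til\omega(\til H\til d_Q\til x,\til H\til d_Q\til y)$. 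The factor $\tfrac12$ in $B$ is needed precisely because both summands $\til\beta(\til d_Q\til x,\til y)$ and $\til\beta(\til x,\til d_Q\til y)$ in the Lie derivative formula contribute the same term (the other summand in each being killed by $\til d_Q^{\,2}=0$), so the two contributions must be averaged. Matching all nine terms yields (\ref{g^* omega = omega' + L_Q alpha'}).

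The main obstacle is the consistent bookkeeping of Koszul signs: the prefactor $(-1)^{k+1}$ in $\til\beta$, the signs in the Lie derivative formula, the sign picked up when sliding $\til d_Q$ across $\til\omega$ via the Poincar\'e identity, and the odd parity of $\til H$ all interact. Once these are handled systematically, the verification is a routine matching; there is no global structural subtlety, only a finite amount of algebra organized by the three-piece decomposition of $\til\beta$.
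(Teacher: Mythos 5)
Your proposal is correct and follows essentially the same route as the paper's proof: reduce to verifying \eqref{g^* omega = omega' + L_Q alpha'}, rewrite $\omega(g\til x,g\til y)=\til\omega(fg\,\til x,fg\,\til y)$, substitute $fg=\mr{id}-\til d_Q\til H-\til H\til d_Q$, and match the resulting terms against $(\LL_{\til Q}\til\beta)(\til x,\til y)$ computed piecewise from \eqref{alpha from Lemma 1.5}. Your attribution of the four linear terms to the first piece, the symmetric quadratic term (with the averaging explanation for the factor $\tfrac12$) to the second, and the two asymmetric quadratic terms to the third agrees exactly with the paper's term-by-term bookkeeping.
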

\begin{proof}
Property (\ref{f^* omega'=omega + L_Q alpha }) holds with $\beta=0$ by the assumption $f^*\til\omega=\omega$. Let us check \ref{g^* omega = omega' + L_Q alpha'}. We have
\begin{multline}\label{lemma 1.5 computation 1}
\omega(g(\til x),g(\til y))-\til\omega(\til x,\til y)\underset{\mr{using}\,f^*\til \omega=\omega}{=}\til \omega(fg(\til x),fg(\til y))-\til \omega(\til x,\til y)\\
=\omega\left((\mr{id}-\til d_Q \til H-\til H \til d_Q)(\til x),(\mr{id}-\til d_Q\til H-\til H \til d_Q)(\til y)\right)
-\til \omega(\til x,\til y)\\
=\underbrace{-\til \omega(\til d_Q \til H \til x,\til y)}_{a}\underbrace{-\til \omega(\til H \til d_Q\til x,\til y)}_{b}\underbrace{-\til\omega(\til x,\til d_Q \til H \til y)}_{c}\underbrace{-\til \omega(\til x,\til H \til d_Q \til y)}_{d}\\
+ 
\underbrace{\til \omega(\til d_Q \til H \til x, \til d_Q \til H \til y)}_{0}+\underbrace{\til \omega(\til d_Q \til H \til x, \til H \til d_Q \til y)}_{e}+ \underbrace{\til \omega(\til H\til d_Q \til x, \til H \til d_Q \til y)}_{f}+\underbrace{\til \omega(\til H\til d_Q \til x, \til d_Q \til H \til y)}_{g} .
\end{multline}
On the other hand, we have
\begin{multline}\label{lemma 1.5 computation 2}
(-1)^{k}(\til \beta(\til d_Q \til x,\til y)+(-1)^{|\til x|+1}\til \beta(\til x,\til d_Q \til y))\underset{(\ref{alpha from Lemma 1.5})}{=}
\\
=\underbrace{-\til \omega(\til H\til d_Q \til x,\til y)}_{b}+\underbrace{(-1)^{|\til x|+1}\til \omega(\til d_Q \til x,\til H\til y)}_{c}\\
+\underbrace{\frac12 \til \omega(\til H \til d_Q \til x,\til H \til d_Q \til y)}_{\frac12 f}
+\underbrace{(-1)^{|\til x|}\frac12\til \omega(\til H (\til d_Q)^2\til x, \til H\til y)}_0
+\underbrace{\til \omega(\til H\til d_Q \til x,\til d_Q \til H\til y)}_{g}\\
+\underbrace{(-1)^{|x'|}\omega'(H'x',d'_Qy')}_{a}\underbrace{-\til \omega(\til x,\til H\til d_Q \til y)}_{d}\\
+\underbrace{(-1)^{|\til x|+1}\frac12 \til \omega(\til H \til x, \til H (\til d_Q)^2 \til y)}_0 +\underbrace{\frac12 \til\omega(\til H\til d_Q \til x, \til H \til d_Q \til y)}_{\frac12 f}+\underbrace{(-1)^{|\til x|+1}(\til H\til x, \til d_Q \til H \til d_Q \til y)}_{e} .
\end{multline}
Comparing the terms in (\ref{lemma 1.5 computation 1}) and (\ref{lemma 1.5 computation 2}) and using $\til \omega(\til d_Q \til x,\til y)=(-1)^{|\til x|}\til \omega(\til x,\til d_Q \til y)$ we see that the expressions are equal and thus (\ref{g^* omega = omega' + L_Q alpha'}) holds.
\end{proof}

\subsection{Relative version in the linear case}
\begin{definition}
Let $(\F',d'_Q,\omega')$ be a $(k+1)$-Poincar\'e complex and let $(\F,d_Q)$ be a cochain complex equipped with a chain map $\pi\colon \F\ra\F'$ equipped with a degree $k$ constant symplectic structure $\omega\colon \wedge^2 \F\ra \RR$ satisfying 
$$ \omega(d_Q x,y)+ (-1)^{|x|+1}\omega(x,d_Q y)= (-1)^{k+1} \omega'(\pi(x),\pi(y))  $$
instead of (\ref{def of Poincare complex: omega(dx,y)=omega(x,dy)}). 
Then we say that $(\F,d_Q,\omega,\pi)$ is a $k$-Poincar\'e complex \emph{relative} to the $(k+1)$-Poincar\'e complex $(\F',d'_Q,\omega')$.
\end{definition}

\begin{definition}
Let $(\F,d_Q,\omega,\pi)$ and $(\til\F,\til{d}_Q,\til\omega,\til\pi)$ be two $k$-Poincar\'e complexes relative to the same $(k+1)$-Poincar\'e complex $(\F',d'_Q,\omega')$. We define a weak equivalence between $(\F,d_Q,\omega,\pi)$ and $(\til\F,\til{d}_Q,\til\omega,\til\pi)$ as in Definition \ref{def: WE linear} where additionally we require the properties 
$$\til\pi\circ f=\pi,\quad \pi\circ g=\til\pi,\quad \pi \circ H=0,\quad \til\pi \circ\til H=0.  $$
\end{definition}

We note that Lemma \ref{lemma 1.5} works in the relative case with no changes.

\section{Gluing via fiber products in a classical BV-BFV theory}
\subsection{Classical BV-BFV theories: a reminder}
\label{ss: class BV-BFV def}
In \cite{CMR1} the authors of the present paper and N. Reshetikhin introduced a refinement of Batalin--Vilkovisky formalism for gauge theories on manifolds with boundary, compatible with cutting--gluing, dubbed the ``BV-BFV formalism.'' For reader's convenience we recap the main definition.\footnote{We are following the version of the definition from \cite{MSW} in the non-extended case.
}

\begin{definition}\label{def: BV-BFV}
An $n$-dimensional classical BV-BFV theory $\mc{T}$ assigns to a closed $(n-1)$-manifold $\Sigma$ a ``BFV phase space'' -- a quadruple $(\Phi_\Sigma,Q_\Sigma,\omega_\Sigma=\delta\beta_\Sigma, S_\Sigma)$ consisting of: 
\begin{itemize}
\item A graded Fr\'echet 
manifold 
$\Phi_\Sigma$ of smooth sections of a graded vector bundle $E^\dd\ra \Sigma$.
\item 
A degree $+1$ local\footnote{
Here ``local'' means that the result of acting by $Q_\Sigma$ on a field at $x$ depends only on the jet of the field at $x$.
} cohomological vector field $Q_\Sigma$ on $\Phi_\Sigma$. 
\item An exact $0$-symplectic\footnote{In the context of Fr\'echet manifolds, we understand a symplectic structure as a  \emph{weakly} nondegenerate closed 2-form, where ``weakly nondegenerate'' means that the induced sharp map $\omega^\#\colon T\Phi\ra T^*\Phi$ is injective.} form $\omega_\Sigma=\delta\alpha_\Sigma$, where $\alpha_\Sigma=\int_\Sigma \underline\alpha_\Sigma$ with $\ul\alpha_\Sigma\in \Omega^{n-1,1}_\mr{loc}(\Sigma\times \Phi_\Sigma)$.\footnote{The subscript ``loc'' stands for locality: the evaluation of the form at $x\in \Sigma$ depends only on the jet of the field at $x$. The superscript $n-1,1$ is the de Rham bi-degree -- along $\Sigma$ and along $\Phi_\Sigma$. Equivalently, one can say that $\ul\alpha_\Sigma$ is a form on the jet bundle of $E^\dd$ of horizontal de Rham degree $n-1$ and vertical de Rham degree $1$.}
\item A degree $+1$ function $S_\Sigma=\int_\Sigma L_\Sigma$ -- a Hamiltonian for $Q_\Sigma$, with $L_\Sigma^\dd\in \Omega^{n-1,0}_\mr{loc}(\Sigma\times \Phi_\Sigma)$. 
\end{itemize}
In particular, $(\Phi_\Sigma,Q_\Sigma,\omega_\Sigma,S_\Sigma)$ is a $0$-Hamiltonian manifold.

To an $n$-manifold $M$ with boundary $\Sigma=\dd M$, the theory $\mc{T}$ assigns a quintuple $(\F_M,Q_M,\omega_M,S_M,\pi_{M,\Sigma})$ consisting of: 
\begin{itemize}
\item  A graded Fr\'echet 
 manifold $\F_M$ (``space of BV fields'') of smooth sections of a graded vector bundle $E\ra M$.
\item A degree $+1$ local cohomological vector field $Q_M$ on $\F_M$.
\item  A $(-1)$-symplectic form $\omega_M=\int_M\ul{\omega}_M$ on $\F_M$, with $\ul\omega_M\in \Omega^{n,2}_\mr{loc}(M\times \F_M)$. 
\item  A degree $0$ function $S_M=\int_M L_M$ on $\F_M$ (the BV action) of the form,  with $L_M\in \Omega_\mr{loc}^{n,0}(M\times \F_M)$.
\item  A surjective submersion $\pi_{M,\Sigma}\colon \F_M\ra \Phi_\Sigma$ satisfying the dg map property $Q_M \pi_{M,\Sigma}^* = \pi_{M,\Sigma}^* Q_\Sigma$.
\end{itemize}
Instead of being a Hamiltonian for $Q_M$, the function $S_M$ is assumed to satisfy the structure relation
\begin{equation}\label{BV-BFV eq}
\iota_{Q_M}\omega_M=\delta S_M-\pi_{M,\Sigma}^* \alpha_\Sigma ,
\end{equation}
linking the bulk data on $M$ and boundary data on $\Sigma$.
\end{definition}

In particular $(\F_M,Q_M,\omega_M,S_M,\pi_{M,\Sigma})$ is a $(-1)$-Hamiltonian manifold
relative to the exact $0$-Hamiltonian manifold $(\Phi_\Sigma,Q_\Sigma,\omega_\Sigma=\delta\alpha_\Sigma,S_\Sigma)$, in the sense of Definition \ref{def: relative symp and Ham mfd}.

\subsection{Gluing via fiber products}
Assume that we have an $n$-dimensional classical BV-BFV theory  $\mc{T}$ on a closed $n$-manifold $M$.\footnote{\label{footnote: M closed assumption}
The assumption that $M$ is closed is made for simplicity of exposition and is not essential. If $M$ has boundary, we need to replace $(-1)$-symplectic manifolds/Poincar\'e complexes below with their relative versions, relative to the boundary BFV data assigned by $\mc{T}$ to $\dd M$.
}
Let $M$ be cut by a closed $(n-1)$-submanifold $\Sigma$ into two $n$-manifolds with boundary, $M_1$ and $M_2$.
Let $\FF_M$ be the space of fields on $M$ and let 
\begin{multline} \label{fiber product}
\til{\FF}_{M,\Sigma}=\FF_{M_1}\times_{\Phi_\Sigma} \FF_{M_2}\\
 =
\{(\phi_1,\phi_2)\in \FF_{M_1}\times \FF_{M_2}\;\mr{s.t.}\; \pi_1(\phi_1)=\pi_2(\phi_2)\}
\end{multline}
be the fiber product (in the category of dg manifolds\footnote{Fiber products in the category of dg manifolds are not always well-defined, but in the case of (\ref{fiber product}) it is well-defined, since $\pi_{1,2}$ are surjective submersions. 
In the context of dg geometry, one usually gets to this nice case by replacing the dg manifolds appropriately with equivalent ones.
A related point: it would be natural to speak in terms of \emph{homotopy} fiber products in present context, but due to the fact that $\pi_{1,2}$ are surjective submersions, the strict fiber product provides a particular model for the homotopy fiber product.
}) of the spaces of fields on $M_1$ and $M_2$ over the phase space for the interface $\Sigma$. Here $\phi_{1,2}$ are fields on $M_1$ and $M_2$ and $\pi_{i}=\pi_{M_i,\Sigma}\colon \FF_{M_{i}}\ra \Phi_\Sigma$ with $i=1,2$ is the restriction of fields to the boundary (a structure map of the BV-BFV package). Put another way, $\til{\FF}_{M,\Sigma}$ fits into the fiber product diagram
$$
\begin{CD}
\til{\FF}_{M,\Sigma} @>p_2>> \FF_{M_2} \\
@Vp_1VV @V\pi_2VV \\
\FF_{M_1} @>\pi_1>> \Phi_\Sigma
\end{CD}
$$

One has a natural inclusion\footnote{Note that $i$ is not the identity: the right hand side of (\ref{i F to fiber product}) is strictly larger than the left hand side. Indeed, $\FF_M$ consists of smooth fields, whereas $\til{\FF}_{M,\Sigma}$ consists of fields smooth away from $\Sigma$ and with a special type of singularity allowed on $\Sigma$.}
\begin{equation}\label{i F to fiber product}
i\colon \FF_M\hra \til{\FF}_{M,\Sigma} .
\end{equation}
Let $\omega_M$ be the BV $(-1)$-symplectic structure on $\FF_M$ and let
\begin{equation}\label{til omega}
\til\omega_{M,\Sigma}=p_1^* \omega_{M_1}+p_2^* \omega_{M_2}
\end{equation} 
be the $(-1)$-symplectic structure on $\til{\FF}_{M,\Sigma}$ arising from $\omega_{M_1}$ and $\omega_{M_2}$ via fiber product construction.

\begin{conjecture}\label{conj: geometric}
The inclusion (\ref{i F to fiber product}) can be extended to a weak equivalence of $(-1)$-symplectic dg manifolds, in the sense of Definition \ref{def: WE manifolds}.
\end{conjecture}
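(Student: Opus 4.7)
The plan is to exhibit an explicit retraction $g\colon \til\FF_{M,\Sigma}\ra \FF_M$ of the inclusion $i$ together with chain homotopy data, then invoke (the nonlinear analog of) Lemma \ref{lemma 1.5} to package this into a weak equivalence. A key preliminary observation makes one half of condition (\ref{def: WE manifolds (b)}) automatic: for $\phi\in\FF_M$, smoothness across $\Sigma$ gives $\int_{M_1}\ul\omega_M+\int_{M_2}\ul\omega_M=\int_M \ul\omega_M$, so $i^*\til\omega_{M,\Sigma}=\omega_M$ on the nose, and one may take $\beta=0$.

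To build $g$, I would fix a tubular neighborhood $\Sigma\times(-\epsilon,\epsilon)\hra M$ of $\Sigma$ together with a suitable smooth cut-off supported in it, and use these to define a smoothing operation that replaces the possibly non-smooth normal behaviour of a pair $(\phi_1,\phi_2)\in\til\FF_{M,\Sigma}$ by a genuinely smooth representative. Two properties are essential: $g$ must intertwine the two cohomological vector fields $\til Q$ and $Q_M$ (i.e., be a dg map), and $g\circ i=\mr{id}$, so that we may take the homotopy $H$ on $\FF_M$ to be zero. The main content is then the chain homotopy $\til H$ on $\til\FF_{M,\Sigma}$ satisfying $\til Q\til H+\til H \til Q=\mr{id}-i\circ g$; heuristically, $\til H$ trivialises the ``jump of normal derivatives across $\Sigma$'' degrees of freedom. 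In standard linear examples (BF, Chern--Simons, etc.), I expect this to be provided by a Hodge-theoretic or elliptic-parametrix argument in the collar, adapted to the specific form of $Q_M$.

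Once $(g,\, H=0,\, \til H)$ are in hand, Lemma \ref{lemma 1.5} (applied at the level of the linearised cotangent complex, or via its obvious nonlinear extension to constant symplectic forms) manufactures the closed $2$-form $\til\beta\in\Omega^2(\til\FF_{M,\Sigma})_{-2}$ from the data $(\til\omega_{M,\Sigma},\til Q,\til H)$ via formula (\ref{alpha from Lemma 1.5}), completing the weak equivalence. If instead $M$ has boundary (cf.~footnote \ref{footnote: M closed assumption}), one passes to the relative versions of Definitions \ref{def: WE manifolds} and \ref{def: WE linear} and further imposes $\til\pi\circ i=\pi$, $\pi\circ g=\til\pi$, $\pi\circ H=0$, $\til\pi\circ\til H=0$, which only constrains the construction away from $\dd M$.

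The main obstacle is the middle step: a naive mollifier does not commute with $Q_M$ on the nose, so $g$ and $\til H$ must be engineered by hand from the particular form of $Q_M$ and the locality of the theory. This is presumably why the conjecture is established through case-by-case constructions in the examples of the later sections and through the general construction suggested by Fiorenza and Vallette mentioned in the acknowledgments, rather than by a uniform argument at the level of abstract BV-BFV theories.
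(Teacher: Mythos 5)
Your outline correctly observes that $i^*\til\omega_{M,\Sigma}=\omega_M$ holds on the nose (so one may take $\beta=0$), and your collar--mollifier strategy is essentially the route the paper takes for the \emph{linear} statement, Conjecture \ref{claim: lin case}/Theorem \ref{thm main}: there $p$, $H$, $\til{H}$ are built case by case from a smoothing kernel or a cut-off Green's function supported near $\Sigma$, and $\til\beta$ is then manufactured by Lemma \ref{lemma 1.5}. One correction at that level: the paper explicitly remarks after Theorem \ref{thm main} that $p\,i=\mr{id}$ does \emph{not} hold in any of its examples, so you cannot arrange $g\circ i=\mr{id}$ and $H=0$; this is not fatal to the strategy (Lemma \ref{lemma 1.5} only needs $f^*\til\omega=\omega$ together with \emph{some} homotopies $H,\til{H}$), but the claim should be dropped.

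The genuine gap is that Conjecture \ref{conj: geometric} concerns general, interacting theories, where $Q_M$ and $\til{Q}$ are nonlinear cohomological vector fields. Then ``chain map'' and ``chain homotopy'' for $g$ and $\til{H}$, and the formula (\ref{alpha from Lemma 1.5}), are not directly meaningful: one needs $L_\infty$ morphisms and $L_\infty$ homotopies at the level of $\mr{Sym}_\mr{co}\F$, and condition (\ref{def: WE manifolds (b)}) of Definition \ref{def: WE manifolds} must be verified after a tangent lift to $T[1]\F$, $T[1]\til\F$, including the check that the resulting $\til\beta$ is $\delta$-closed. The paper's own sketch (Section \ref{ss: sketch of proof of Conj 2.2}) supplies exactly this missing step by factoring through cohomology: one chooses contractions of $(\F,d_Q)$ and $(\til\F,\til{d_Q})$ onto $\HH(\F)$, $\HH(\til\F)$ compatible with $i$, runs the homological perturbation lemma to upgrade them to $L_\infty$ contractions, and sets $p=\mc{J}\circ(i_*)^{-1}\circ\til{\mc{R}}$ following Loday--Vallette, obtaining $\til\beta=-\til{\mc{H}}^\mr{lifted}\til\omega$; the price is that this construction is not local near $\Sigma$. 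Your proposal, even if all of its analytic steps were carried out, would establish only the free case; for the conjecture as stated you need either this homotopy-transfer mechanism or a substitute for it, and your text does not supply one.
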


A version adapted for a free (quadratic) BV-BFV theory is the following.
\begin{conjecture}\label{claim: lin case}
If $\mc{T}$ is a free BV-BFV theory, 
the inclusion (\ref{i F to fiber product}) can be extended to a weak equivalence of degree $-1$ Poincar\'e cochain complexes, in the sense of Definition \ref{def: WE linear}.
\end{conjecture}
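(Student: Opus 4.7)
The inclusion $f := i\colon \FF_M \hookrightarrow \til\FF_{M,\Sigma}$ satisfies $i^*\til\omega = \omega$ on the nose: this is immediate from the locality of the integrand $\ul\omega_M$ and the splitting $\int_M = \int_{M_1} + \int_{M_2}$ on smooth fields, the interface $\Sigma$ having measure zero. Lemma \ref{lemma 1.5} then reduces the conjecture to constructing a quasi-inverse $g$ and chain homotopies $H, \til H$, with $\beta = 0$ automatic and $\til\beta$ produced by formula (\ref{alpha from Lemma 1.5}). So the whole proof rests on constructing the triple $(g, H, \til H)$.

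The approach I would pursue is heat-kernel smoothing. Equip $M$ with a Riemannian metric and the graded bundle $E \to M$ underlying $\FF_M$ with a fiberwise metric, yielding a formal adjoint $d_Q^*$ and a Laplacian $\Delta = d_Q d_Q^* + d_Q^* d_Q$ in each cohomological degree. Let $j\colon \til\FF_{M,\Sigma} \hookrightarrow \Gamma_{\mr{distr}}(M,E)$ denote the natural gluing map $(\phi_1,\phi_2) \mapsto \phi_1\sqcup\phi_2$. The content of taking the fiber product over $\Phi_\Sigma$ is precisely that the matching $\pi_{M_1,\Sigma}(\phi_1) = \pi_{M_2,\Sigma}(\phi_2)$ suffices to prevent $d_Q(\phi_1 \sqcup \phi_2)$ from developing a $\delta$-function along $\Sigma$, so that $j$ is a chain map. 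For any $\tau > 0$ the semigroup $e^{-\tau\Delta}$ commutes with $d_Q$ and is smoothing, so
$$ g := e^{-\tau\Delta}\circ j \colon\ \til\FF_{M,\Sigma}\to\FF_M, \qquad K_\tau := -\int_0^\tau d_Q^*\, e^{-s\Delta}\, ds $$
are well-defined. Set $H := K_\tau$ on $\FF_M$ and $\til H := i \circ K_\tau \circ j$ on $\til\FF_{M,\Sigma}$ (the output of $K_\tau\circ j$ is automatically smooth, so re-inclusion via $i$ is unambiguous). The standard identity $d_Q K_\tau + K_\tau d_Q = e^{-\tau\Delta} - \mr{id}$, equivalent to $\Delta = [d_Q, d_Q^*]$, specializes on $\FF_M$ to $d_Q H + H d_Q = gf - \mr{id}$; conjugating by $(i,j)$ and using both the injectivity of $j$ and the intertwining $j\, d_Q = d_Q\, j$ gives $d_Q \til H + \til H d_Q = fg - \mr{id}$ on $\til\FF_{M,\Sigma}$.

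The main obstacle is analytic rather than algebraic. For this pipeline to yield honest continuous maps of Fr\'echet complexes one must verify: (a) existence and smoothing properties of $d_Q^*$, $\Delta$, and $e^{-\tau\Delta}$ in each cohomological degree, and, when $\dd M \neq \emptyset$, compatibility with the relative version of the statement; (b) the intertwining $j\, d_Q = d_Q\, j$, which rests on the matching imposed by $\pi_{M_i,\Sigma}$ being adequate for the order of $d_Q$ -- for a first-order $d_Q$ like the de Rham differential, $0$-jet matching suffices, but higher-order $d_Q$ may require either that the BV-BFV package record more normal jet data at $\Sigma$ than a naive $\Phi_\Sigma$ does, or that the scalar heat kernel be replaced by a higher-order regularizer adapted to $d_Q$. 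A purely algebraic alternative would be to apply the homological perturbation lemma to the short exact sequence $0 \to \FF_M \to \til\FF_{M,\Sigma} \to \mathcal{J} \to 0$, where $\mathcal{J}$ is the complex of normal-derivative jumps across $\Sigma$, and to exhibit a contracting homotopy on $\mathcal{J}$; this route bypasses the analysis in (a), at the cost of having to produce such a contraction by hand, which is tractable in abelian BF-type theories but not obviously uniform.
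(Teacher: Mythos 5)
Your overall strategy---reduce to producing $(p,H,\til H)$ together with $i^*\til\omega=\omega$, then let Lemma \ref{lemma 1.5} supply $\beta=0$ and $\til\beta$ via (\ref{alpha from Lemma 1.5})---is exactly the paper's, and your heat-kernel smoothing is in essence the paper's ``Construction 1 (Hodge-theoretic)'' of Sections \ref{ss: scalar field Hodge homotopy} and \ref{s: p-form ED}. Be aware, though, that the statement is left as a \emph{conjecture} in the paper: it is established only for the specific list of theories in Theorem \ref{thm main}, by separate constructions, and your proposal likewise does not prove it for an arbitrary free BV-BFV theory. You explicitly defer the analytic points on which everything hinges, so what you have is a program rather than a proof.

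Within that program there are concrete problems. First, the universal choice $\Delta=[d_Q,d_Q^*]$ with $K_\tau=-\int_0^\tau d_Q^*e^{-s\Delta}\,ds$ fails precisely where the conjecture is delicate: in the second-order formalism $d_Q$ contains the second-order operator $d*^{-1}d\colon\Omega^\p\to\Omega^{n-\p}$, so $[d_Q,d_Q^*]$ is a mixed-order operator whose fourth-order leading symbol is degenerate for $\p\geq 1$ (the symbol of $d*^{-1}d$ annihilates $\xi\wedge\Omega^{\p-1}$), and the existence and smoothing of its semigroup is not the ``standard identity'' you invoke. The paper sidesteps this by always using the ordinary Hodge Laplacian $\Delta=dd^*+d^*d$---which commutes with $d$, $d^*$ and $*$, hence with every $d_Q$ that occurs---and by assembling the homotopy degree by degree from the components $\Xi,\Phi,\Psi$ of (\ref{p-form Xi Phi Psi}); it stresses that dropping the $\Phi,\Psi$ corrections makes $p$ non-diagonal and destroys smoothness of its image, so the naive $d_Q^*\Delta^{-1}$-type homotopy is genuinely insufficient. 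Second, your claim that $K_\tau\circ j$ is ``automatically smooth'' is false: $d_Q^*\Delta^{-1}(\mr{id}-e^{-\tau\Delta})$ applied to a bounded piecewise-smooth section is only finitely differentiable across $\Sigma$ (transmission regularity), so $\til H$ cannot be written as $i\circ K_\tau\circ j$; it must be defined directly as an operator on $\til\FF_{M,\Sigma}$, and it is $p=\mr{id}-[d_Q,\til H]$, not $\til H$, whose image one must show lands in smooth sections. (There is also a sign slip: your $K_\tau$ satisfies $d_QK_\tau+K_\tau d_Q=e^{-\tau\Delta}-\mr{id}$, the negative of the convention (\ref{chain homotopy rel H}) and of Definition \ref{def: WE linear}, which would propagate into (\ref{alpha from Lemma 1.5}).) Finally, your construction, like the paper's Construction 1, cannot give the locality near $\Sigma$ of item (\ref{thm (c)}) of Theorem \ref{thm main}; that is not demanded by the bare conjecture, but it is the property the paper's second constructions (cut-off Green's functions $\Delta_\mu^{-1}$) are designed to achieve, and your proposed algebraic alternative via the jump complex would face the same limitation.
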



The main result of this note is the proof of this conjecture in a collection of cases. More precisely, we prove the following.
\begin{thm}\label{thm main}
Let $\mathcal{T}$ be one of the following: 
\begin{itemize}
\item abelian Chern--Simons theory, 
\item abelian $BF$ theory,
\item $\p$-form electrodynamics in the first- or second-order formalism (including massless scalar field and usual electrodynamics  as $\p=0$ and $\p=1$ cases).
\end{itemize}
Then, given any open neighborhood $U\subset M$ of $\Sigma$, one can extend the inclusion (\ref{i F to fiber product}) 
to a package of maps $(i,p,H,\til{H})$:
\begin{equation}\label{ipHtilH package}
H\;\;\rotatebox[origin=c]{90}{$\curvearrowright$}\quad \FF_M \underset{p}{\stackrel{i}{\rightleftarrows}} \til{\FF}_{M,\Sigma}\quad \rotatebox[origin=c]{270}{$\curvearrowright$}\;\; \til{H} 
\end{equation}
such that:
\begin{enumerate}[(a)]
\item  \label{thm (a)} $i$ and $p$ are chain maps w.r.t. differentials $d_Q$ on $\FF_M$ and $\til{d_Q}$ on $\til\FF_{M,\Sigma}$:
\begin{eqnarray}
\til{d_Q}\, i &=& i\, d_Q, \label{d_Q i = i d_Q} \\
d_Q\, p &=& p\, \til{d_Q}. \label{d_Q p = p d_Q}
\end{eqnarray}
\item \label{thm (b)} $H$ is the chain homotopy between identity and $p\,i$ on $\FF_M$ and $\til{H}$ is the chain homotopy between identity and $i\, p$ on $\til\FF_{M,\Sigma}$:
\begin{eqnarray}
d_QH+Hd_Q &=& \mr{id}-p\, i, \label{chain homotopy rel H} \\
\til{d_Q}\til{H}+\til{H} \til{d_Q} &=&\mr{id}-i\,p . \label{chain homotopy rel tilH}
\end{eqnarray}
\item \label{thm (c)} The package of maps $(i,p,H,\til{H})$
is \emph{local near $\Sigma$}: $p$ is smoothing in the neighborhood  $U$ of $\Sigma$ and is identity outside $U$.\footnote{Put another way, $p$ is an integral operator with distributional kernel which is smooth in $U\times U$ and the Dirac delta-form on the diagonal outside $U\times U$.} The operators $H,\til{H}$ vanish on fields supported outside $U$.
\item The $(-1)$-symplectic structures on $\FF_M$ and $\til\FF_{M,\Sigma}$ are related by 
\begin{eqnarray}
i^* \til\omega &=& \omega, \label{i^* til omega = omega} \\
p^* \omega &=& \til\omega + \LL_{\til{Q}} \til\beta, \label{p^* omega = til omega+ L_Q alpha}
\end{eqnarray}
with $\til\beta$ given by (\ref{alpha from Lemma 1.5}).
\end{enumerate}
\end{thm}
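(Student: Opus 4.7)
The plan is to treat all four theory families uniformly. In each case the BV space $\FF_M$ is a finite direct sum of spaces of smooth differential forms on $M$ (with form-degree and cohomological-degree shifts, doubled in the $BF$ case, with auxiliary ghost/antifield sectors), and the BRST differential $d_Q$ is built from the de Rham differential $d$. Elements of $\til\FF_{M,\Sigma}$ are then pairs of smooth forms on $M_1,M_2$ whose tangential pullbacks to $\Sigma$ agree; equivalently, forms on $M$ smooth on each side of $\Sigma$ and continuous in tangential components across $\Sigma$, while the $dt$-components and the normal derivative of the tangential part may jump.

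First I would fix a tubular neighbourhood $U\cong\Sigma\times(-1,1)$ of $\Sigma$ inside the given open set, with normal coordinate $t$; a sub-collar $V\Subset U$; and a smooth cutoff $\chi\colon M\to[0,1]$ with $\chi\equiv1$ on $V$ and $\mathrm{supp}(\chi)\subset U$. In the collar every form decomposes as $\phi=a+dt\wedge b$ with $a,b$ smooth $t$-families of forms on $\Sigma$ (one-sidedly for elements of $\til\FF_{M,\Sigma}$). The obstruction to smoothness across $\Sigma$ is concentrated in the jump $[b]:=b_2|_{t=0^+}-b_1|_{t=0^-}$ together with the jump of $\dd_t a$, both of which are themselves smooth forms on $\Sigma$.

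The key construction is an operator $\til H$ on $\til\FF_{M,\Sigma}$ of cohomological degree $-1$, supported in $U$, whose image is of $dt$-wedge type and which is tuned so that $\til\phi-(\til d_Q\til H+\til H\til d_Q)\til\phi$ is fully smooth across $\Sigma$. Concretely, $\til H\til\phi$ is built as a $dt$-wedge form obtained by multiplying a fixed bump profile in $t$ by smooth $t$-extensions of the normal-direction jumps of $\til\phi$, all cut off by $\chi$. Having $\til H$, I would define
\[
p\;:=\;\mathrm{id}-(\til d_Q\til H+\til H\til d_Q),
\]
which automatically commutes with $\til d_Q$ (property (\ref{d_Q p = p d_Q})), equals the identity outside $U$, and lands in $\FF_M$ by design of $\til H$. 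Since $\til H$ annihilates smooth fields, we have $p\circ i=\mathrm{id}_{\FF_M}$ on the nose; hence I may set $H=0$, making (\ref{chain homotopy rel H}) tautological and (\ref{chain homotopy rel tilH}) the defining identity of $p$.

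The symplectic conditions then follow essentially for free. Property (\ref{i^* til omega = omega}) is immediate from additivity of integrals $\int_M=\int_{M_1}+\int_{M_2}$. Property (\ref{p^* omega = til omega+ L_Q alpha}) is the content of Lemma \ref{lemma 1.5} with $\beta=0$; the defect form $\til\beta$ is delivered by formula (\ref{alpha from Lemma 1.5}) in terms of $(\til\omega,\til d_Q,\til H)$ and makes sense because $\til H$ is compactly supported in $U$. The main obstacle I anticipate is the construction of $\til H$: it must simultaneously be supported in $U$, annihilate $i(\FF_M)$, and be such that $\mathrm{id}-(\til d_Q\til H+\til H\til d_Q)$ restores \emph{full} smoothness across $\Sigma$ (not merely higher regularity). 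These three requirements essentially pin down $\til H$ up to a choice of normal bump profile, and verifying them in each of the four theories amounts to writing out the tangential--normal split of the respective $d_Q$ in the collar and reading off which jumps need to be killed.
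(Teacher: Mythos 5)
There is a genuine gap, and it sits exactly where you locate ``the main obstacle'': the existence of your $\til{H}$. As stated, the construction cannot work. You require the image of $\til{H}$ to be of $dt$-wedge type. But then, writing a field in the collar as $\phi=a(t)+dt\wedge b(t)$, both $\til{d_Q}\til{H}\phi$ and $\til{H}\til{d_Q}\phi$ are again of $dt$-wedge type (for the de Rham part of $d_Q$ one has $d(dt\wedge c)=-dt\wedge d_\Sigma c$), so $p=\mathrm{id}-[\til{d_Q},\til{H}]$ leaves the tangential component $a(t)$ completely untouched. A generic element of $\til\FF_{M,\Sigma}$ has $a_1(0)=a_2(0)$ but $\dd_t^k a_1(0)\neq \dd_t^k a_2(0)$ for $k\geq 1$, so $p\phi$ is not smooth across $\Sigma$ and $p$ does not land in $\FF_M$. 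Relatedly, you misidentify the obstruction to smoothness: it is not just $[b]$ and $[\dd_t a]$ but the infinite tower of jumps $[\dd_t^k a]$, $[\dd_t^k b]$ for all $k$, only the zeroth tangential one being killed by the fiber-product condition. Even if you allow $\til{H}$ a tangential component, you would have to cancel all of these jumps at once with a single degree $-1$ operator built from a bump profile, which forces a non-canonical Borel-type extension of an arbitrary sequence of forms on $\Sigma$ and is nowhere justified. A further warning sign is your conclusion $p\,i=\mathrm{id}$: the paper explicitly remarks that this identity fails in all of its constructions, which reflects the fact that no local retraction of the kind you posit is available.

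The paper's route is structurally different precisely at this point: $p$ is a genuinely \emph{smoothing} operator near $\Sigma$ (an integral operator with smooth kernel in $U\times U$ and $\delta_{\mathrm{Diag}}$ outside), not $\mathrm{id}$ minus a jump-cancellation term. For abelian Chern--Simons and $BF$ the kernel is a Mathai--Quillen/Thom-form mollifier $\wh\rho$ with homotopy kernel $\wh\chi$ satisfying $d\wh\chi=\delta_{\mathrm{Diag}}-\wh\rho$; for the $\p$-form theories, where $d_Q$ contains $d*^{-1}d$, the homotopy is built from the Green's function of the Laplacian on $U$ with Dirichlet conditions, cut off by a bump function, and $p$ is then \emph{defined} as $\mathrm{id}-[\til{d_Q},\til{H}]$ --- the formula you want, but with an $\til{H}$ that smooths rather than extends jumps, and with $H\neq 0$ on $\FF_M$ since $p\,i\neq\mathrm{id}$. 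Your treatment of items (\ref{i^* til omega = omega}) and (\ref{p^* omega = til omega+ L_Q alpha}) via additivity of the integral and Lemma \ref{lemma 1.5} with $\beta=0$ does agree with the paper, but it only becomes available once a correct $(p,H,\til{H})$ package is in hand.
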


In particular, relations (\ref{d_Q i = i d_Q})--(\ref{chain homotopy rel tilH}) imply that $i$ and $p$ are quasi-isomorphisms (mutually inverse in cohomology). 

The proof of this theorem (and the construction of the corresponding maps) is given in the subsequent sections: 
the case of abelian Chern--Simons and abelian $BF$ is covered in Section \ref{s: ex - abCS}. The case of $\p$-form theory in the first-order formalism---in Section \ref{ss: p-form 1st order local homotopy} and the second-order formalism---in Section \ref{ss: p-form 2nd order} (``local-near-$\Sigma$ version'').

We note that (\ref{i^* til omega = omega}) holds trivially by construction (\ref{til omega}), while (\ref{p^* omega = til omega+ L_Q alpha}) follows from items (\ref{thm (a)}), (\ref{thm (b)}), (\ref{thm (c)}) and Lemma \ref{lemma 1.5}.

\begin{remark}
In Theorem \ref{thm main} we \emph{do not} require the property $p\, i=\mr{id}$ (in fact, it doesn't hold in the examples discussed in this paper); similarly, we do not require that $i\, p$ is an idempotent.
\end{remark}

\begin{remark}
We want to emphasize the significance of item (\ref{thm (c)}) in  Theorem \ref{thm main}: it implies functoriality of the weak equivalence with respect to Segal's functorial picture of QFT. Put another way, the  data of the weak equivalence is localized near the cut $\Sigma$ and does not interact with cutting--gluing away from $\Sigma$.

A related point is that although in the beginning of this section we asked 
$M$ to be closed 
(for simplicity and so that we can use the convenient language of Poincar\'e complexes), 
we can without problem allow $M$ to have boundary, disjoint from the hypersurface $\Sigma$, cf. footnote \ref{footnote: M closed assumption}.
\end{remark}

In a future paper we plan to study examples of Conjecture \ref{conj: geometric} coming from interacting field theories (in the spirit of Theorem \ref{thm main}, with complexes of fields 
enhanced to
$L_\infty$ algebras of fields).

\subsection{Sketch of proof of Conjecture \ref{conj: geometric} by factoring through cohomology (and ignoring locality near $\Sigma$)}
\label{ss: sketch of proof of Conj 2.2}
The following construction was independently suggested to us by Domenico Fiorenza and Bruno Vallette.

Consider the setup of Conjecture \ref{conj: geometric}. We will assume that the BV-BFV theory $\mc{T}$ is a perturbation of a free theory $\mc{T}_0$ in which Conjecture \ref{claim: lin case} holds.\footnote{We understand that $\mc{T}$ is obtained from $\mc{T}_0$ by adding terms of polynomial degree $\geq 3$ in fields to the quadratic action of $\mc{T}_0$ and, respectively, perturbing the linear cohomological vector field $Q_0=d_Q$ of $\mc{T}_0$ by terms of polynomial degree $\geq 2$ in fields. For simplicity we assume that the $(-1)$-symplectic form does not get deformed.} In particular, $\F_M$ is a graded vector space with cohomological vector field $Q_M$ vanishing at the origin and corresponding to an $L_\infty$ algebra structure on $\F_M[-1]$;\footnote{For the sake of convenience, below in this section we will  omit the degree shift $[-1]$ in notations, and speak of $\F$ as an $L_\infty$ algebra. We will also use the terms ``dg map'' and ``$L_\infty$ map'' interchangeably.} $\omega_M$---a constant $(-1)$-symplectic form---corresponds to a degree $-3$ inner product on $\F_M[-1]$ with respect to which the $L_\infty$ operations have cyclic property. 

We have a strict (i.e. having only linear component) $L_\infty$ map (\ref{i F to fiber product}) and we want to promote it to a weak equivalence in the sense of Definition \ref{def: WE manifolds}.

We employ the construction of \cite[Section 10.4.6]{LV} adapted to our setting: 
we consider the cohomology\footnote{
We use the font to distinguish cohomology $\HH$ from chain homotopies $H$.
}
$\HH(\F)$  of $\F$  with respect to $d_Q$ (the linear part of the $Q$) and the cohomology $\HH(\til\F)$ of $\til\F$ with respect to the linear part $\til{d_Q}$ of $\til Q$.
We choose contractions 
\begin{equation}\label{contractions}
{h\;\rotatebox[origin=c]{90}{$\curvearrowright$}\; (\F,d_Q) \underset{j}{\stackrel{r}{\rightleftarrows}} \HH(\F)},\qquad {\til{h}\;\rotatebox[origin=c]{90}{$\curvearrowright$}\; (\til\F,\til{d_Q}) \underset{\til{j}}{\stackrel{\til{r}}{\rightleftarrows}} \HH(\til\F)} 
\end{equation} 
such that 
\begin{equation}\label{j,r commute with i}
i\circ j= \til{j}\circ i_*,\quad i_* \circ r =\til{r}\circ i,\quad i\circ h=\til{h}\circ i,
\end{equation}
where 
\begin{equation}\label{i_*}
i_*\colon \HH(\F)\ra \HH(\til\F)
\end{equation} 
is the linear isomorphism in cohomology induced from $i$ ($i_*$ is an isomorphism by assumption that the free theory $\mc{T}_0$ satisfies Conjecture \ref{claim: lin case}).
The homological perturbation lemma\footnote{See \cite[Section 6.4]{KS}, \cite{Markl} and \cite[Section 10.3]{LV} on the homotopy transfer of $\infty$-algebras.} applied at the level of symmetric coalgebras deforms the maps (\ref{contractions}) to
\begin{equation}\label{contractions deformed}
\mc{H}\;\rotatebox[origin=c]{90}{$\curvearrowright$}\; (\F,Q) \underset{\mc{J}}{\stackrel{\mc{R}}{\rightleftarrows}} (\HH(\F),Q_\HH),\qquad \til{\mc{H}}\;\rotatebox[origin=c]{90}{$\curvearrowright$}\; (\til\F,\til{Q}) \underset{\til{\mc{J}}}{\stackrel{\til{\mc{R}}}{\rightleftarrows}} (\HH(\til\F),\til{Q_{\HH}}) .
\end{equation} 
Here $\mc{J,R,\til{J},\til{R}}$ are $L_\infty$ maps, $\mc{H,\til{H}}$ -- $L_\infty$ homotopies and $Q_\HH,\til{Q_\HH}$ -- induced minimal $L_\infty$ algebra structures on cohomology.


By the fiber product construction (\ref{fiber product}), $i$ intertwines the $L_\infty$ structures $i\circ Q=\til{Q}\circ i\colon \mr{Sym}_\mr{co}\F\ra \mr{Sym}_\mr{co}\til\F$. This implies that the map (\ref{i_*}) is a strict $L_\infty$ isomorphism of minimal $L_\infty$ algebras $\HH(\F)$, $\HH(\til\F)$. Also, relations (\ref{j,r commute with i}) become 
\begin{equation}\label{J,R commute with i}
i \circ \mc{J} = \til{\mc{J}}\circ i_*,\quad i_*\circ \mc{R}=\til{\mc{R}}\circ i,\quad i\circ\mc{H}=\til{\mc{H}}\circ i.
\end{equation} 

%

Next---this is the key step borrowed from \cite[Section 10.4.6]{LV}---we construct the $L_\infty$ map
\begin{equation}\label{inverse q-iso (Loday-Vallette)}
p=\mc{J}\circ (i_*)^{-1} \circ \mc{\til{R}}\colon \quad \til{\F}\ra \F .
\end{equation}
It is an $L_\infty$ quasi-inverse of the canonical inclusion (\ref{i F to fiber product}). More precisely, one has the diagram of maps 
\begin{equation}\label{ipHtilH package for L_infty algebras}
\mc{H}\;\;\rotatebox[origin=c]{90}{$\curvearrowright$}\quad (\FF,Q) \underset{p}{\stackrel{i}{\rightleftarrows}} (\til{\FF},\til{Q})\quad \rotatebox[origin=c]{270}{$\curvearrowright$}\;\; \til{\mc{H}} ,
\end{equation}
where $i$ and $p$ are mutually quasi-inverse $L_\infty$ quasi-isomorphisms and $\mc{H}$, $\til{\mc{H}}$ are the respective $L_\infty$ homotopies, i.e., one has 
\begin{eqnarray}
\mr{id}-p\circ i&=&Q \mc{H}+\mc{H} Q , \\
\mr{id}-i\circ p&=& \til{Q} \til{\mc{H}}+\til{\mc{H}} \til{Q} . \label{id-i p = QH+HQ}
\end{eqnarray}
These relations follow immediately from the construction (\ref{inverse q-iso (Loday-Vallette)}) and from (\ref{J,R commute with i}).

The package of maps (\ref{ipHtilH package for L_infty algebras}), upon dualization and passing to the completion of symmetric algebras to algebras of smooth functions, gives rise to 
a chain equivalence of chain complexes
\begin{equation}\label{ipHtilH package for C^infty}
\mc{H}\;\;\rotatebox[origin=c]{90}{$\curvearrowright$}\quad (C^\infty(\FF),Q) \underset{p^*}{\stackrel{i^*}{\leftrightarrows}} (C^\infty(\til{\FF}),\til{Q})\quad \rotatebox[origin=c]{270}{$\curvearrowright$}\;\; \til{\mc{H}} . 
\end{equation}
Thus, we have (\ref{def: WE manifolds (a)}) of Definition \ref{def: WE manifolds} of weak equivalence between $\F$ and $\til\F$.

For (\ref{def: WE manifolds (b)}) of Definition \ref{def: WE manifolds}, we note that, by a straightforward extension from functions to differential forms\footnote{\label{footnote 11}
The tangent lift is constructed as follows: we have ``doubled'' contractions (\ref{contractions}), ${h\oplus h\;\rotatebox[origin=c]{90}{$\curvearrowright$}\; (T[1]\F,d_Q\oplus d_Q) \underset{j\oplus j}{\stackrel{r\oplus r}{\rightleftarrows}} T[1]\HH(\F)}$ and similar one for tilde-side. We think of $T[1]\F$ as $\F\oplus \F[1]$ with an extra differential $\delta$ mapping between the two copies of $\F$. Then we pass to (completed) symmetric algebras of the duals and turn on the perturbation of the differential $L_{d_Q}\ra L_Q$ via homological perturbation lemma. This yields the contraction 
$\mc{H}^\mr{lifted}\;\rotatebox[origin=c]{90}{$\curvearrowright$}\; (\Omega^\bt(\F),\LL_Q) \underset{\mc{J}^*}{\stackrel{\mc{R}^*}{\leftrightarrows}} (\Omega^\bt(\HH(\F)),\LL_{Q_\HH})$ and similar for tilde-side. Then we repeat the construction (\ref{inverse q-iso (Loday-Vallette)}) at the level of $T[1]\F$, $T[1]\til\F$. This gives us the maps of the chain equivalence (\ref{chain equivalence of forms}). 
We remark that all maps in (\ref{chain equivalence of forms}) commute with $\delta$ by construction, since the maps of the non-perturbed ``doubled'' contractions above commute with $\delta$ and since the differential perturbation commutes with $\delta$ also.
%
%
} (a.k.a. tangent lift from $\F,\til\F$ to the shifted tangent bundles $T[1]\F$, $T[1]\til{F}$), one gets a chain equivalence
\begin{equation}\label{chain equivalence of forms}
\mc{H}^\mr{lifted}\;\;\rotatebox[origin=c]{90}{$\curvearrowright$}\quad (\Omega^\bt(\FF),\LL_Q) \underset{p^*}{\stackrel{i^*}{\leftrightarrows}} (\Omega^\bt(\til{\FF}),\LL_{\til{Q}})\quad \rotatebox[origin=c]{270}{$\curvearrowright$}\;\; \til{\mc{H}}^\mr{lifted} .
\end{equation}
We know that, by fiber product construction, we have $i^*\til\omega=\omega$. We also have
\begin{equation}
p^*\omega=p^*i^*\til\omega = (\mr{id}-\LL_{\til{Q}} \til{\mc{H}}^\mr{lifted}-\til{\mc{H}}^\mr{lifted}\LL_{\til{Q}})\til\omega=\til\omega + \LL_{\til{Q}} \til\beta
\end{equation}
with $\til\beta=-\til{\mc{H}}^\mr{lifted}\til\omega$. Here we used $\LL_{\til{Q}} \til\omega=0$ and (the tangent lift of) the equation (\ref{id-i p = QH+HQ}). Note that $\til\beta$ is a $\delta$-closed $2$-form, since $\delta\til\omega=0$ and since $\til{\mc{H}}^\mr{lifted}$ commutes with $\delta$ (cf. footnote \ref{footnote 11}). Thus, (\ref{def: WE manifolds (b)}) of Definition \ref{def: WE manifolds} holds with $\beta=0$ and $\til\beta=-\til{\mc{H}}^\mr{lifted}\til\omega$.

\begin{remark}
The construction given in this section is not compatible with locality near the hypersurface $\Sigma$ in the sense of (\ref{thm (c)}) of Theorem \ref{thm main}. We hope to find a proof of Conjecture \ref{conj: geometric} by presenting a weak equivalence that is local near $\Sigma$.
\end{remark}

\section{``Smearing'' homotopy in de Rham theory}
The following lemma is rephrased from Kontsevich-Soibelman, \cite[section 6.5, lemma 2]{KS}.\footnote{
See also \cite{HL}.}
\begin{lemma}\label{lemma (smearing)}
Let $N$ be an oriented (not necessarily compact) $n$-manifold and $V\subset N\times N$ 
a tubular neighborhood of the diagonal $\mr{Diag}\subset N\times N$ with projection $r\colon V\ra \mr{Diag}$. Then:
\begin{enumerate}
\item \label{lm (1)}
There exists a smooth closed $n$-form $\rho$ on $N\times N$ satisfying
\begin{itemize}
\item $\mr{supp}(\rho)\subset V$,
\item 
$r_* \rho=1$ (the integral of $\rho$ over each fiber of $r$ is $1$).\footnote{If $N$ is compact, an equivalent statement is that the cohomology class $[\rho]\in H^n_c(V)$ is Poincar\'e dual to the homology class of the diagonal $[\mr{Diag}]\in H_n(V)$.}
\end{itemize}
\item Denote $R\colon \Omega_\mr{distr}^\bt(N)\ra \Omega^\bt(N)$ the integral operator from distributional to smooth forms determined by the integral kernel $\rho$. Then
\begin{itemize}
\item $R$ is a chain map w.r.t.  the de Rham differential.
\item 
Let $\iota\colon \Omega^\bt(N)\hra \Omega^\bt_\mr{distr}(N)$ be the canonical inclusion of smooth forms into distributional forms. Then $\iota R$ and $R\iota$ are chain-homotopic to identity. More precisely,
there exists a distributional form $\chi\in \Omega^{n-1}_\mr{dist}(N\times N)$ supported in $V$ and smooth away from $\mr{Diag}$ such that $d\chi=\delta_\mr{Diag}-\rho$ (as distributional forms). As a consequence, one has
\begin{eqnarray}
\mr{id}_{\Omega_\mr{distr}^\bt}- \iota R &=& d\kappa+\kappa d , \label{lemma KS chain homotopy rel 1} \\
\mr{id}_{\Omega^\bt}- R \iota  &=& d\kappa|_{\Omega_\bt}+\kappa|_{\Omega_\bt} d . \label{lemma KS chain homotopy rel 2}
\end{eqnarray}
Here 
$\kappa\colon \Omega^\bt_\mr{distr}(N)\ra \Omega^{\bt-1}_\mr{distr}(N)$ is the operator defined by the integral kernel $\chi$.

\end{itemize}

\end{enumerate}
\end{lemma}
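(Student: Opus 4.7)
The plan is to identify the tubular neighborhood $V$ (shrinking if necessary so that it is star-shaped in the normal directions) with an open neighborhood of the zero section in the normal bundle $\nu\to\mathrm{Diag}$, which is canonically isomorphic to $TN\to N$. Since $N$ is oriented, so is $TN$, and the Thom class exists. For part (\ref{lm (1)}), I would construct $\rho$ as a closed representative of the Thom class with vertical compact support contained in $V$: start from a fiberwise bump form with fiberwise integral $1$ (built from local trivializations and a partition of unity), add a $d$-exact correction (or equivalently apply the Mathai--Quillen formula) to close it up, and multiply by a bump on the base to localize the support. The condition $r_*\rho=1$ is then precisely the fiberwise integral equal to $1$. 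The chain-map property of $R$ is immediate: writing $R\alpha=(p_1)_*(p_2^*\alpha\wedge\rho)$ with $p_{1,2}\colon N\times N\to N$ the two projections, Stokes for fiber integration together with $d\rho=0$ gives $dR=Rd$, and $R\alpha$ is smooth because $\rho$ is.

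For the chain homotopy, the plan is to apply Cartan's magic formula to the radial retraction of $V$ onto $\mathrm{Diag}$. In tubular coordinates $(x,v)$ on $V\cong\nu$, set $H_s(x,v)=(x,sv)$ for $s\in(0,1]$ with time-dependent generator $X_s=\frac{1}{s}X_\mathrm{rad}$, the rescaled fiber Euler vector field. Since $d\rho=0$, Cartan gives $\frac{d}{ds}H_s^*\rho=d(H_s^*\iota_{X_s}\rho)$, so integrating over $s\in[\epsilon,1]$ yields $\rho-H_\epsilon^*\rho=d\chi_\epsilon$ with $\chi_\epsilon=\int_\epsilon^1 H_s^*\iota_{X_s}\rho\,ds$. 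In the limit $\epsilon\to 0^+$, the smooth $n$-forms $H_\epsilon^*\rho$ should converge weakly to $\delta_\mathrm{Diag}$ (the standard fact that the fiberwise rescaling of a normalized bump approximates a delta function), and $\chi_\epsilon$ should converge to a distributional $(n-1)$-form $\chi$ supported in $V$ and smooth off the diagonal. Up to a sign this $\chi$ satisfies $d\chi=\delta_\mathrm{Diag}-\rho$. Relations (\ref{lemma KS chain homotopy rel 1}) and (\ref{lemma KS chain homotopy rel 2}) then follow by identifying integral kernels: $\mathrm{id}$ has kernel $\delta_\mathrm{Diag}$, both $\iota R$ and $R\iota$ have kernel $\rho$, and $d\kappa+\kappa d$ has kernel $d\chi$.

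The main technical obstacle will be justifying the two distributional limits as $\epsilon\to 0$ and verifying that $\chi$ is smooth away from $\mathrm{Diag}$. Away from the diagonal, for any fixed $v\neq 0$ in fiber coordinates, $H_s(x,v)$ exits the support of $\rho$ once $s$ drops below some threshold $s_0(v)>0$, so the $s$-integrand vanishes for small $s$ and $\chi_\epsilon$ stabilizes to a smooth form as $\epsilon\to 0$. At the diagonal the singularity of $\chi$ is genuine but integrable as a distribution; both this and the weak limit $H_\epsilon^*\rho\to\delta_\mathrm{Diag}$ reduce to short pointwise estimates in fiber coordinates using that $\rho$ is smooth and compactly supported in the fibers. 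A minor subsidiary point is that the support condition $\mathrm{supp}(\chi)\subset V$ requires $V$ to be star-shaped in fiber directions, which we arrange by shrinking $V$ at the outset.
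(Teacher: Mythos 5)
Your part (1), the chain-map property of $R$, and the reduction of (\ref{lemma KS chain homotopy rel 1})--(\ref{lemma KS chain homotopy rel 2}) to the single kernel identity $d\chi=\delta_{\mr{Diag}}-\rho$ all match the paper's argument, which likewise takes a Mathai--Quillen representative of the Thom class of the normal bundle of the diagonal. The gap is in your construction of $\chi$: you run the radial scaling in the wrong direction. With $H_s(x,v)=(x,sv)$, the pullback $H_\epsilon^*\rho$ has fiberwise density of the shape $\epsilon^{n}f(\epsilon v)$ in its top vertical-degree component, which \emph{dilates} the bump rather than concentrating it; the ``standard mollifier fact'' you invoke applies to $\epsilon^{-n}f(v/\epsilon)$, i.e., to the pullback along $v\mapsto v/\epsilon$. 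As $\epsilon\to 0^+$ one has $H_\epsilon^*\rho\to r^*(\rho|_{\mr{Diag}})$, the pullback of an Euler form of the normal bundle, and \emph{not} $\delta_{\mr{Diag}}$. Already for $N$ a point and $\rho=f(v)\,dv^1\wedge\cdots\wedge dv^n$ with $\int f=1$ the limit is $0$, so your $\chi_0=\int_0^1 H_s^*\iota_{X_s}\rho\,ds$ is the ordinary Poincar\'e-lemma primitive with $d\chi_0=\rho$; this cannot have support in $V$, since a compactly supported primitive of $\rho$ would force $\int\rho=0$ by Stokes. And indeed your $\chi_0$ fails the support condition: every $v\neq 0$ is scaled into $\mr{supp}\,\rho$ by some $s\in(0,1]$.

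The fix is exactly the paper's construction: flow \emph{outward}, setting $\chi=\int_1^{\infty}H_t^*\iota_{X_t}\rho\,dt$ with $H_t(x,v)=(x,tv)$ for $t\in[1,+\infty)$ (this is the paper's $\bar\chi=\pi_* s^*\tau$ with the stretch factor ranging over $[1,+\infty)$). Then $H_T^*\rho\to\delta_{\mr{Diag}}$ weakly as $T\to\infty$; the $t$-integral converges because for fixed $v\neq 0$ the integrand vanishes once $t|v|$ exceeds the support radius of $\rho$ in the fiber (giving smoothness off the diagonal and an integrable $|v|^{-(n-1)}$ singularity on it); and the support stays inside $V$ because $tv\in\mr{supp}\,\rho$ with $t\geq 1$ forces $|v|$ to be small. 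Your convergence and regularity discussion in the final paragraph transfers essentially verbatim to this corrected flow.
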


We call the operator $R$ a ``smearing'' (or ``smoothing,'' or ``mollifying'') operator.

\begin{proof}
A form $\rho$ satisfying conditions (\ref{lm (1)}) can be constructed e.g. by taking a Mathai-Quillen representative  $\tau\in \Omega^n(\N\mr{Diag})$ of the Thom class of the normal bundle of $\mr{Diag}$ (see \cite{MQ}) and choosing an embedding $\phi\colon \N\mr{Diag}\hra V$.\footnote{We denote the normal bundle by $\N$ to avoid confusion with the manifold $N$.} 
Then one can define 
$$
\rho = \left\{
\begin{array}{ccc}
(\phi^{-1})^* \tau & \mr{in} & \phi(\N\mr{Diag}), \\
0  & \mr{outside} & \phi(\N\mr{Diag})
\end{array}
\right.
$$
Likewise, one can construct the distributional form  $\chi$ as follows: first construct $\bar\chi\in \Omega^{n-1}_\mr{distr}(\N\mr{Diag})$ as 
$\bar\chi= \pi_* s^* \tau$ where
$$
\begin{CD}
[1,+\infty) \times \N\mr{Diag} @>{s}>> \N\mr{Diag}\\
@V{\pi}VV \\ 
\N\mr{Diag} 
\end{CD}
 $$
where $s$ stretches the vector in the fiber of $\N\mr{Diag}\ra \mr{Diag}$ by a factor in $[1,+\infty)$; $\pi$ is the projection to the second factor; by Stokes' theorem one has $d\bar\chi=\delta_{\mr{zero\;section}}-\tau$. Then one sets $\chi=\phi_*\bar\chi$.
\end{proof}

\section{Abelian Chern--Simons theory}
\label{s: ex - abCS}
Abelian Chern--Simons theory in BV-BFV formalism (we refer to \cite{CMR1} for details) assigns to 
an oriented compact 3-manifold $M$ 
the space of fields 
$$\F_M=\Omega^\bt(M)[1]\quad\ni \mc{A}$$
-- the degree shifted de Rham complex,
equipped with the quadratic BV action
$$S_M=\frac12 \int_M \mc{A}\wedge d\mc{A},$$
the linear cohomological vector field $Q_M=\int_M d\mc{A}\frac{\delta}{\delta\mc{A}}$ and the constant $(-1)$-symplectic form $\omega_M=-\frac12 \int_M \delta\mc{A}\wedge \delta\mc{A}$.

To a surface $\Sigma$ cutting $M$ into $M_1$ and $M_2$, the theory assigns the phase space
$$ \Phi_\Sigma = \Omega^\bt(\Sigma)[1]\quad \ni \mc{A}_\Sigma $$ 
equipped
with the symplectic form $\omega_\Sigma=\delta\left(\frac12 \int_\Sigma \mc{A}_\Sigma\wedge \delta\mc{A}_\Sigma\right)$, the cohomological vector field $Q_\Sigma=\int_\Sigma d\mc{A}_\Sigma\frac{\delta}{\delta \mc{A}_\Sigma}$ and the Hamiltonian $S_\Sigma={\frac12 \int_\Sigma \mc{A}_\Sigma \wedge d\mc{A}_\Sigma}$. Projections $\pi_{1,2}\colon \F_{M_{1,2}}\ra \Phi_\Sigma$ are given by pulling back a form on $M_{1,2}$ to $\Sigma$.


The canonical inclusion (\ref{i F to fiber product}) is:
\begin{equation*}
\FF_M=\Omega^\bt(M)[1] \qquad \stackrel{i}{\hra} \qquad \til{\FF}_{M,\Sigma}= \Omega^\bt(M_1)[1]\times_{\Omega^\bt(\Sigma)[1]} \Omega^\bt(M_2)[1],
\end{equation*}
where left and right side are complexes w.r.t. the de Rham differential on $M$.

We construct a chain map $p\colon \til{\FF}_{M,\Sigma}\ra \FF_M$ as follows. Fix an embedding $\psi\colon \N\Sigma\ra M$ of the normal  bundle of $\Sigma$ into $M$, with image $U=\psi(\N\Sigma)$ an open neighborhood of $\Sigma$ (which can be made as thin as needed). We apply Lemma \ref{lemma (smearing)} to $N=\N\Sigma$ and consider the corresponding forms $\rho,\chi$. Define a distributional $n$-form $\wh\rho$ on $M\times M$ by
 $$ \wh\rho=\left\{  
\begin{array}{ccc}
(\psi^{-1}\times \psi^{-1})^*
\rho 
 & \mr{in} & U\times U ,\\
\delta_{\mr{Diag}_M} & \mr{outside} & U\times U 
\end{array}
 \right.
 $$
and define a distributional $(n-1)$-form $\wh\chi$ on $M\times M$   as
 $$ \wh\chi=\left\{  
\begin{array}{ccc}
(\psi\times \psi)_*
\chi 
 & \mr{in} & U\times U ,\\
0 & \mr{outside} & U\times U 
\end{array}
 \right.
 $$

We define the map $p\colon \til{\FF}_{M,\Sigma}\ra \FF_M$ as the integral operator with kernel $\wh\rho$ (thus, $p$ is smoothing inside the neighborhood $U$ of $\Sigma$ and identity outside); since $\wh\rho$ is a closed distributional form on $M\times M$, $p$ is indeed a chain map. Likewise, we define the chain homotopy $\til{H}\colon \til{\FF}_{M,\Sigma}\ra \til{\FF}_{M,\Sigma}[-1]$ as the integral operator with kernel $\wh{\chi}$. Denote its restriction to $\FF_M$ by $H$. The chain homotopy identities (\ref{chain homotopy rel H}), (\ref{chain homotopy rel tilH}) are immediate consequences of (\ref{lemma KS chain homotopy rel 1}), (\ref{lemma KS chain homotopy rel 2}).

Thus we have the quadruple of maps (\ref{ipHtilH package}) satisfying all points of the Theorem \ref{thm main}.

%
%

\subsection{Abelian $BF$ theory}
The case of abelian $BF$ theory is a trivial modification of the construction for abelian Chern--Simons. Now $M$ can be of any dimension $n$; the BV action is 
$$S_M=\int_M \mc{B}\wedge d\mc{A}$$ 
and 
the space of fields consists of two copies  of de Rham complex: 
$$\FF_M=\Omega^\bt(M)[\p]\oplus \Omega^\bt(M)[n-\p-1]\quad\ni (\mc{A},\mc{B})$$ 
(for some fixed shift $\p\in\mathbb{Z}$) with differential $d_Q=d\oplus d$ and with constant symplectic form $\omega=(-1)^n\int_M \delta\mc{B}\wedge \delta\mc{A}$. 

The BFV phase space is 
$$\Phi_\Sigma=\Omega^\bt(\Sigma)[\p]\oplus \Omega^\bt(\Sigma)[n-\p-1]$$ 
and the projections $\pi_{1,2}\colon \FF_{M_{1,2}}\ra \Phi_\Sigma$ are given by the pullback of $\mc{A}$- and $\mc{B}$-form from $M_{1,2}$ to $\Sigma$.
The canonical inclusion (\ref{i F to fiber product}) is
\begin{equation*}
\begin{CD}
\FF_M=\Omega^\bt(M)[\p]\oplus \Omega^\bt(M)[n-\p-1] \\
@V{i}VV \\
\til\FF_{M,\Sigma} =
\left(\Omega^\bt(M_1)\times_{\Omega^\bt(\Sigma)} \Omega^\bt(M_2) \right)[\p] \oplus \left(\Omega^\bt(M_1)\times_{\Omega^\bt(\Sigma)} \Omega^\bt(M_2) \right)[n-\p-1] 
\end{CD}
\end{equation*}

We proceed as in abelian Chern--Simons and set $p\colon \til\FF_{M,\Sigma}\ra \FF_M$ to be given by $p$ from above (the integral operator with kernel $\wh\rho$) acting on both $\mc{A}$- and $\mc{B}$-forms in $\til\FF_{M,\Sigma}$.
Likewise, we define the homotopies to be given by $\til{H}$ and $H$ from above (defined by the integral kernel $\wh\chi$) acting diagonally on both components in $\til\FF_{M,\Sigma}$ or $\FF_M$, respectively.

\section{Massless scalar field in the second-order formalism}
\label{s: scalar field}
As a warm-up for the $\p$-form electrodynamics for general $\p$ (coming in Section \ref{s: p-form ED} below), let us first discuss the case $\p=0$, i.e., massless scalar field.

Massless scalar field on a Riemannian $n$-manifold $M$ is defined by the BV action $S_M=\int_M \frac12 d\phi\wedge *d\phi$. The space of bulk fields is the Poincar\'e complex\footnote{See footnote \ref{footnote: d*d sign in second order p-form theory} below for the explanation of the reason why we write the differential as $d*^{-1}d$ rather than $d*d$, and why the projection to boundary field $\pi$ contains $*^{-1}d$ rather than $*d$.}
\begin{equation}\label{scalar field F}
\FF_M=\qquad \Big(\;\; \underset{\phi}{\Omega^0(M)}\xra{d*^{-1}d} \underset{\phi^+}{\Omega^{n}(M) } \;\; \Big)
\end{equation}
with constant $(-1)$-symplectic form $\omega_M=(-1)^n\int_M \delta \phi^+\wedge\delta \phi$. 
The boundary phase space for a closed $(n-1)$-submanifold $\Sigma\subset M$ (splitting $M$ into $M_1$ and $M_2$) is 
$$\Phi_\Sigma = \underset{\phi_\Sigma}{\Omega^0(\Sigma)}\oplus \underset{P_\Sigma}{\Omega^{n-1}(\Sigma)}$$
with zero differential and $0$-symplectic form $\omega_\Sigma=-\int_\Sigma \delta\phi_\Sigma\wedge \delta P_\Sigma$. The projection $\pi\colon \FF_{M_{1,2}}\ra \Phi_\Sigma$ maps $(\phi,\phi^+)$ to $(\phi|_\Sigma,(*^{-1}d\phi)|_\Sigma)$.

The fiber product (\ref{fiber product}) is 
\begin{equation}\label{scalar field fiber product}
\til\FF_{M,\Sigma}=\qquad \Big(\;\;\Omega^0(M_1)\underset{\Omega^0(\Sigma)\oplus \Omega^{n-1}(\Sigma)}{\times} \Omega^0(M_2) \;\; \xra{d*^{-1}d} \;\; \Omega^n(M_1)\times \Omega^n(M_2) \;\;\Big) .
\end{equation}
The first term of this complex is the space of functions on $M$, smooth on $M_1$ and $M_2$, continuous and differentiable across $\Sigma$. The second term consists of $n$-forms on $M$ smooth on $M_1$ and $M_2$ and no regularity condition at $\Sigma$.

We have, as always, the canonical inclusion $i\colon \FF_M\hra \til\FF_{M,\Sigma}$ of fields on $M$ into the fiber product. 

\subsection{Construction 1 (Hodge-theoretic)} \label{ss: scalar field Hodge homotopy}
For the map 
$p\colon \til\FF_{M,\Sigma}\ra \FF_M$ we take 
$$p=e^{-\epsilon\Delta}$$ -- the heat flow in time $\epsilon>0$, acting on $0$- and $n$-forms from the fiber product (\ref{scalar field fiber product}). Here $\Delta=dd^*+d^*d$ is the Hodge Laplacian acting on forms on $M$ (in the non-negative convention). We understand $\epsilon$ as a regulator or smearing parameter.

\begin{remark} Unlike the example of Section \ref{s: ex - abCS}, here $p$ is not identity away from a neighborhood of $\Sigma$ -- it is a smearing operator everywhere on $M$. 
Note that we cannot naively use $p$ of Section \ref{s: ex - abCS} since it is not 
a chain map w.r.t. the differential $d*d$ in the chain complexes (\ref{scalar field F},\ref{scalar field fiber product}).
\end{remark}

We construct the chain homotopies $H\;\;\rotatebox[origin=c]{90}{$\curvearrowright$} \;\;\FF_M$ and $\til{H}\;\;\rotatebox[origin=c]{90}{$\curvearrowright$}\;\; \til\FF_{M,\Sigma}$ as follows:
\begin{equation}\label{scalar field H}
H\colon\qquad \Omega^0(M)\xleftarrow{(-1)^{n}(\mr{id}-e^{-\epsilon \Delta})\Delta^{-1}*} \Omega^n(M) 
\end{equation}
and 
\begin{equation}\label{scalar field tilde H}
\til{H}\colon\qquad \til\Omega^0(M,\Sigma)\xleftarrow{(-1)^{n}(\mr{id}-e^{-\epsilon \Delta})\Delta^{-1}*} \til\Omega^n(M,\Sigma)  .
\end{equation}
Here we denoted the two terms of the fiber product complex (\ref{scalar field fiber product}) by $\til\Omega^{0,n}(M,\Sigma)$. We understand the operator $\Delta^{-1}$ as zero on harmonic forms and as the inverse of $\Delta$ on the orthogonal complement of harmonic forms (where $\Delta$ is strictly positive). 

To summarize the result, the package of maps $(i,p,H,\til{H})$ we just constructed satisfies (\ref{thm (a)}) and (\ref{thm (b)}) of Theorem \ref{thm main}, but does not satisfy (\ref{thm (c)}).

\subsection{Construction 2 (local near $\Sigma$)}
\label{ss: scalar field local homotopy}

\begin{lemma}\label{lemma 4.2}
 Let $M$ be a Riemannian $n$-manifold cut into $M_1$ and $M_2$ by a closed $(n-1)$-submanifold $\Sigma\subset M$. Let $U\subset M$ be an open neighborhood of $\Sigma$ in $M$. Then there exists a distributional $0$-form $\chi\in \Omega^0_\mr{distr}(M\times M)$ such that
\begin{enumerate}[(a)]
\item \label{lemma 4.2 (a)} 
$\chi$ vanishes outside $U\times U$.
\item \label{lemma 4.2 (b)} $\chi$ is smooth on $(M\times M)\backslash \mr{Diag}_{\bar{U}}$.
\item \label{lemma 4.2 (c)} $\chi$ satisfies  
$$(d*^{-1}d)_1\chi=\left\{ 
\begin{array}{cl}
\delta_\mr{Diag}^{(n,0)}+\mr{smooth}\; (n,0)\mr{-form} & \mr{in}\; U\times U, \\
0 & \mr{outside}
\end{array}
 \right.$$
 and
 $$(d*^{-1}d)_2\chi=\left\{ 
\begin{array}{cl}
\delta_\mr{Diag}^{(0,n)}+\mr{smooth}\; (0,n)\mr{-form} & \mr{in}\; U\times U, \\
0 & \mr{outside}
\end{array}
 \right.$$
where where the subscript in $(d*^{-1}d)_{1,2}$ means that the operator acts on the first or second argument of $\chi$ (first or second copy of $M$ in $M\times M$). Bi-index $(i,j)$ refers to the de Rham bi-degree of a form on $M\times M$, i.e., $\Omega^{i,j}(M\times M)=\Omega^i(M)\widehat{\otimes}\Omega^j(M)$; $\delta_\mr{Diag}^{(i,j)}$ refers to the $(i,j)$-component of the delta-form on the diagonal in $M\times M$.
\end{enumerate}
\end{lemma}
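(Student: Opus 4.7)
The plan is to exhibit $\chi$ as a smooth cutoff of a parametrix of the second-order elliptic operator $L := d\,*^{-1}d$ acting on $0$-forms on $M$ (up to a sign and the volume form, this is the scalar Laplacian). First I would invoke standard elliptic theory to produce a distribution $G\in \Omega^0_\mr{distr}(M\times M)$ that is smooth away from $\mr{Diag}_M$, symmetric under the swap of the two factors, and satisfies
\[
L_1 G = \delta^{(n,0)}_\mr{Diag} + S_1,\qquad L_2 G = \delta^{(0,n)}_\mr{Diag} + S_2
\]
for some forms $S_1,S_2$ smooth on all of $M\times M$. Such a $G$ can be built locally via the Hadamard--Minakshisundaram--Pleijel expansion in geodesically convex charts and patched globally by a partition of unity; alternatively, on compact $M$ one can take an honest Green's function for $L$ modulo harmonic modes.

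Next I would pick a smooth cutoff $\psi\in C^\infty(M\times M)$ with $\mr{supp}(\psi)\subset U\times U$ and $\psi\equiv 1$ on some open neighborhood $V$ of $\mr{Diag}_{\bar U}$ contained in $U\times U$, and set $\chi := \psi\cdot G$. Properties (\ref{lemma 4.2 (a)}) and (\ref{lemma 4.2 (b)}) are then immediate: $\chi$ vanishes outside $U\times U$, and on the complement of $\mr{Diag}_{\bar U}$ one is either outside $\mr{supp}(\psi)$ (where $\chi=0$) or on the smooth locus of $G$. For property (\ref{lemma 4.2 (c)}) I would expand via Leibniz,
\[
L_1 \chi = \psi\cdot L_1 G + [L_1,\psi]\,G = \psi\cdot \delta^{(n,0)}_\mr{Diag} + \psi S_1 + [L_1,\psi]\,G.
\]
Since $\psi\equiv 1$ in a neighborhood of $\mr{Diag}_{\bar U}$, we have $\psi\cdot\delta^{(n,0)}_\mr{Diag}=\delta^{(n,0)}_\mr{Diag}$. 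The commutator $[L_1,\psi]$ is a first-order differential operator whose coefficients (derivatives of $\psi$) vanish on $V$; consequently $[L_1,\psi]G$ only sees $G$ on its smooth locus and is itself smooth on $U\times U$. Hence $L_1\chi = \delta^{(n,0)}_\mr{Diag} + \mr{smooth}$ in $U\times U$ and vanishes identically outside. The analysis of $L_2\chi$ is identical by symmetry of $G$.

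The main technical input is the existence of the symmetric parametrix, which is classical. The only bookkeeping subtlety to watch for is that $L$ sends $0$-forms to $n$-forms, so $L_1 G$ naturally lands in $(n,0)$-forms on $M\times M$ (matching the bidegree of $\delta^{(n,0)}_\mr{Diag}$) and dually for $L_2$; the Leibniz step above respects this bidegree since $\psi$ is a scalar function. I do not expect any serious obstacle beyond arranging the cutoff $\psi$ so that its derivatives are supported strictly away from $\mr{Diag}_{\bar U}$, which is guaranteed by the condition $\psi\equiv 1$ on $V\supset \mr{Diag}_{\bar U}$.
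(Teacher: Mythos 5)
Your overall strategy is the same as the paper's: multiply a fundamental solution of $d*^{-1}d$ by a cutoff supported in $U\times U$ and equal to $1$ near the diagonal, then use Leibniz and the fact that the commutator term only sees the smooth locus of the kernel. (The paper uses the exact Dirichlet Green's function of $\Delta$ on $U$ where you use a global parametrix on $M$ with smooth error terms $S_{1,2}$; both are fine, since the lemma only asks for $\delta_{\mr{Diag}}$ \emph{plus a smooth form}.)

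There is, however, one concrete error in your setup of the cutoff: the function $\psi$ you postulate cannot exist. You ask for $\psi\in C^\infty(M\times M)$ with $\mr{supp}(\psi)\subset U\times U$ and $\psi\equiv 1$ on an open neighborhood $V$ of $\mr{Diag}_{\bar U}$ \emph{contained in} $U\times U$. But $\mr{Diag}_{\bar U}$ contains the points $(x,x)$ with $x\in\dd U$, which do not lie in the open set $U\times U$, so no neighborhood of $\mr{Diag}_{\bar U}$ is contained in $U\times U$. Weakening the condition to $\psi\equiv 1$ on a neighborhood of $\mr{Diag}_U$ does not rescue a \emph{globally smooth} $\psi$ either: the support of such a $\psi$ would contain the closure of that neighborhood, hence $\mr{Diag}_{\bar U}\not\subset U\times U$. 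The correct fix — and what the paper actually does — is to require $\psi$ to be smooth only on $(M\times M)\setminus \mr{Diag}_{\dd U}$, equal to $1$ on a neighborhood of $\mr{Diag}_U$ inside $U\times U$ and to $0$ outside $U\times U$ (such a $\psi$ is produced by a partition of unity on $(M\times M)\setminus\mr{Diag}_{\dd U}$ subordinate to a cover in which every set meeting $\mr{Diag}_U$ lies in $U\times U$). This is consistent with the conclusion of the lemma, which only claims smoothness of $\chi$ away from $\mr{Diag}_{\bar U}$ — the possible bad behavior at $\mr{Diag}_{\dd U}$ is deliberately allowed for. With this amendment you should also note that $\chi=\psi G$ extended by zero is still a well-defined distribution near $\mr{Diag}_{\dd U}$, since $G$ is locally integrable and $\psi$ is bounded; the rest of your Leibniz computation then goes through verbatim on the open set $U\times U$, where the relevant part of the diagonal is $\mr{Diag}_U$ and $\psi\cdot\delta_{\mr{Diag}}=\delta_{\mr{Diag}}$ there.
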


\begin{proof}
Choose a smooth function  $\mu$ on $M\times M \backslash \mr{Diag}_{\dd U}$ (a ``bump function'') such that
\begin{enumerate}
\item $\mu=0$  outside $U\times U$,
\item $\mu=1$ in some open neighborhood $V$ of $\mr{Diag}_U$ contained in $U\times U$.
\end{enumerate}
Such a $\mu$ can be easily constructed using a partition of unity: 
take some covering of $M\times M\backslash \mr{Diag}_{\dd U}$ by open sets $\{v_\alpha\}$ such that if $v_\alpha$ intersects $\mr{Diag}_U$ then it is contained in $U\times U$ and let $\{\psi_\alpha\}$ be a partition of unity subordinate to this cover. Then we can define 
$$\mu=\sum_{\alpha\;\mr{s.t.}\,v_\alpha\cap \mr{Diag}_U\neq \varnothing} \psi_\alpha .$$

Let $G\in C^0(U\times U)$ be the Green function for the Laplacian $\Delta$ on $U$, with Dirichlet boundary condition on $\dd U$.
Then we define 
$$\chi :=\left\{
\begin{array}{ccc}
(-1)^n\mu\cdot G, & \mr{in} & U\times U, \\
0 & \mr{outside} & U\times U 
\end{array}
\right.$$
With this definition, properties (\ref{lemma 4.2 (a)}), (\ref{lemma 4.2 (b)}), (\ref{lemma 4.2 (c)}) are obvious.
\end{proof}

We define the maps $$ H\;\;\rotatebox[origin=c]{90}{$\curvearrowright$}\quad \FF_M \underset{p}{\stackrel{i}{\rightleftarrows}} \til{\FF}_{M,\Sigma}\quad \rotatebox[origin=c]{270}{$\curvearrowright$}\;\; \til{H} $$ 
as follows (recall that $i$ is the canonical inclusion).
\begin{itemize} 
\item 
We define the chain homotopy    $\til{H}\colon \til{\Omega}^n(M,\Sigma)\ra \til{\Omega}^0(M,\Sigma)$ as the integral operator with kernel $\chi$, and the chain homotopy
$H\colon \Omega^n(M)\ra \Omega^0(M)$ as the restriction of $\til{H}$ to smooth forms. 
\item We define the operator $p\colon \til\Omega^n(M,\Sigma)\ra \Omega^n(M) $ as the integral operator with kernel $\delta^{(n,0)}_\mr{Diag}-(d*^{-1}d)_1\chi$. 
\item We define $p\colon  \til\Omega^0(M,\Sigma)\ra \Omega^0(M) $  as the integral operator with kernel $\delta^{(0,n)}_\mr{Diag}-(d*^{-1}d)_2\chi$.
\end{itemize}


With this construction, Theorem \ref{thm main} holds.
Indeed, relation (\ref{d_Q i = i d_Q}) is obvious. Relation (\ref{d_Q p = p d_Q}) follows from the identity at the level of integral kernels of left and right side:
$$(d*^{-1}d)_1 (\delta_{\mr{Diag}}^{(0,n)}-(d*^{-1}d)_2\chi)=(d*^{-1}d)_2 (\delta_{\mr{Diag}}^{(n,0)}-(d*^{-1}d)_1 \chi) . $$ 
Relations (\ref{chain homotopy rel H}), (\ref{chain homotopy rel tilH}) follow from an obvious identity at the level of integral kernels:
\begin{multline*}
(d*^{-1}d)_1 \chi+(d*^{-1}d)_2 \chi =\\ =\left(\delta^{(n,0)}_\mr{Diag}+\delta^{(0,n)}_\mr{Diag}\right)-\left(\delta^{(n,0)}_\mr{Diag}-(d*^{-1}d)_1\chi\right)-\left(\delta^{(0,n)}_\mr{Diag}-(d*^{-1}d)_2\chi\right) .
\end{multline*}
Locality is true by construction.

\section{$\p$-form electrodynamics in the first-order formalism}
\label{s: p-form ED}
Consider theory of $\p$-form field, with $\p\geq 0$, defined in the first order formalism by the classical action 
$$S^\mr{cl}_M=\int_M B\wedge  dA+\frac12 B\wedge *B$$ 
with $(A,B)\in \Omega^p(M)\oplus \Omega^{n-p-1}(M)$. In particular, for $\p=0$ this is the free massless scalar field; for $\p=1$ this is Maxwell's theory of electromagnetic field.

In BV formalism, the BV action is 
\begin{multline} \label{S p-form}
S_M=\int_M B \wedge dA +  A^+ \wedge dc_1 +\sum_{k=2}^\p  c^+_{k-1} \wedge d c_k + \frac12 B\wedge *B = \\
=\int_M \mc{B}\wedge d\mc{A}+\frac12 B\wedge *B ,
\end{multline}
where 
\begin{multline*} 
(\mc{A}=c_\p+\cdots+c_1+A+B^+ \quad ,\quad \mc{B}=B+A^++c_1^++\cdots +c_\p^+) \quad \in \\
\in \quad  \Omega^{0\cdots \p+1}(M)[\p]\oplus \Omega^{n-\p-1\cdots n}[n-\p-1] =\colon \FF_M .
\end{multline*}
Here $\Omega^{i\cdots j}(M)=\oplus_{k=i}^j\Omega^k(M)$ is the truncated de Rham complex.

The action (\ref{S p-form}) is the free theory action corresponding to the Poincar\'e cochain complex structure on $\FF_M$
with differential $d_Q$  given by 
$$ \hspace{-0.5cm}
\begin{array}{*{15}c}
\underset{c_\p}{\Omega^0} & \xra{d} &\cdots &\xra{d}& \underset{c_1}{\Omega^{\p-1}} & \xra{d} & \underset{A}{\Omega^\p} & \xra{d} & \underset{B^+}{\Omega^{\p+1} }
&&&&&& \\
&&&&&&& {\hspace{-0.5cm}\nearrow\hspace{-0.7cm} *} &&&&&&& \\
&&&&&&\underset{B}{\Omega^{n-\p-1}}&\xra{d}& \underset{A^+}{\Omega^{n-\p}} & \xra{d} & \underset{c_1^+}{\Omega^{n-\p+1}}& \xra{d} &\cdots & \xra{d} & \underset{c_\p^+}{\Omega^n}
\end{array}
$$
In particular, $d_Q(B)=dB+*B$. Dually, at the level of coordinates on the space of fields, one has a more familiar formula $Q_M(B^+)=d A + * B$. The constant $(-1)$-symplectic structure on $\FF_M$ is $\omega_M={(-1)^n\int_M \delta \mc{B}\wedge \delta\mc{A}}$. 

The phase space that BV-BFV formalism assigns to a closed $(n-1)$-manifold $\Sigma$ is
\begin{multline}\label{p-form theory phase space}
\Phi_\Sigma=\Omega^{0\cdots \p}(\Sigma)\oplus \Omega^{n-\p-1\cdots n-1}(\Sigma)\quad \ni\\ 
\ni \quad (\mc{A}_\Sigma=c_\p+\cdots+c_1+A\quad,\quad \mc{B}_\Sigma=B+A^++c_1^++\cdots+c_{\p-1}^+)  
\end{multline}
with differential $d_Q$ just the de Rham differential on both summands and the constant $0$-symplectic form $\int_\Sigma \delta \mc{B}_\Sigma\wedge \delta \mc{A}_\Sigma$.

In the setup of Conjecture \ref{claim: lin case}, with $n$-manifold $M$ cut into parts $M_1$, $M_2$ by a closed $(n-1)$-submanifold $\Sigma$, we have the canonical inclusion
\begin{equation}\label{p-form i}
\FF_M\xra{i} \til{\FF}_{M,\Sigma}=\FF_{M_1}\times_{\Phi_\Sigma}\FF_{M_2} .
\end{equation}

\subsection{Construction 1: Hodge homotopy}
For the map 
$p\colon \til{\FF}_{M,\Sigma}\ra \FF_M$, we take 
\begin{equation}\label{p-form p Hodge}
p=e^{-\epsilon \Delta},
\end{equation}
as in Section \ref{ss: scalar field Hodge homotopy}, but now acting on all forms of various degrees in the complex $\til{\FF}_{M,\Sigma}$. It is easy to see that $p$ is a chain map w.r.t. $d_Q$.

We define the chain homotopy $H\;\;\rotatebox[origin=c]{90}{$\curvearrowright$} \;\;\FF_M$ as
\begin{equation}\label{p-form H Hodge}
\vcenter{\hbox{ \scalebox{0.85}{\begin{picture}(0,0)%
\includegraphics{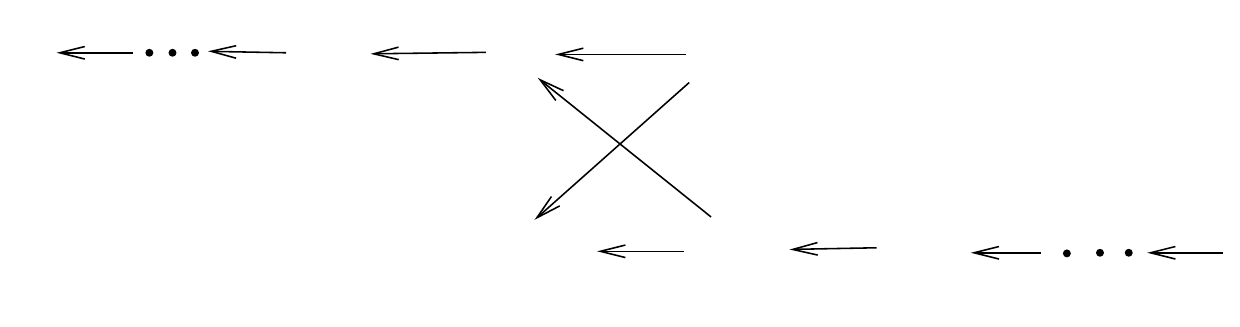}%
\end{picture}%
%
%
\setlength{\unitlength}{3947sp}%
\begingroup\makeatletter\ifx\SetFigFont\undefined%
\gdef\SetFigFont#1#2#3#4#5{%
  \reset@font\fontsize{#1}{#2pt}%
  \fontfamily{#3}\fontseries{#4}\fontshape{#5}%
  \selectfont}%
\fi\endgroup%
\begin{picture}(5940,1556)(691,-2334)
\put(3976,-1666){\makebox(0,0)[lb]{\smash{{\SetFigFont{12}{14.4}{\rmdefault}{\mddefault}{\updefault}{\color[rgb]{0,0,0}$\Psi$}%
}}}}
\put(4017,-1051){\makebox(0,0)[lb]{\smash{{\SetFigFont{12}{14.4}{\rmdefault}{\mddefault}{\updefault}{\color[rgb]{0,0,0}$\Omega^{\p+1}$}%
}}}}
\put(3072,-1051){\makebox(0,0)[lb]{\smash{{\SetFigFont{12}{14.4}{\rmdefault}{\mddefault}{\updefault}{\color[rgb]{0,0,0}$\Omega^\p$}%
}}}}
\put(2127,-1051){\makebox(0,0)[lb]{\smash{{\SetFigFont{12}{14.4}{\rmdefault}{\mddefault}{\updefault}{\color[rgb]{0,0,0}$\Omega^{\p-1}$}%
}}}}
\put(706,-1044){\makebox(0,0)[lb]{\smash{{\SetFigFont{12}{14.4}{\rmdefault}{\mddefault}{\updefault}{\color[rgb]{0,0,0}$\Omega^0$}%
}}}}
\put(3072,-1996){\makebox(0,0)[lb]{\smash{{\SetFigFont{12}{14.4}{\rmdefault}{\mddefault}{\updefault}{\color[rgb]{0,0,0}$\Omega^{n-\p-1}$}%
}}}}
\put(4017,-1996){\makebox(0,0)[lb]{\smash{{\SetFigFont{12}{14.4}{\rmdefault}{\mddefault}{\updefault}{\color[rgb]{0,0,0}$\Omega^{n-\p}$}%
}}}}
\put(4962,-1996){\makebox(0,0)[lb]{\smash{{\SetFigFont{12}{14.4}{\rmdefault}{\mddefault}{\updefault}{\color[rgb]{0,0,0}$\Omega^{n-\p+1}$}%
}}}}
\put(6616,-1996){\makebox(0,0)[lb]{\smash{{\SetFigFont{12}{14.4}{\rmdefault}{\mddefault}{\updefault}{\color[rgb]{0,0,0}$\Omega^n$}%
}}}}
\put(3609,-961){\makebox(0,0)[lb]{\smash{{\SetFigFont{12}{14.4}{\rmdefault}{\mddefault}{\updefault}{\color[rgb]{0,0,0}$\Xi$}%
}}}}
\put(2717,-954){\makebox(0,0)[lb]{\smash{{\SetFigFont{12}{14.4}{\rmdefault}{\mddefault}{\updefault}{\color[rgb]{0,0,0}$\Xi$}%
}}}}
\put(1877,-954){\makebox(0,0)[lb]{\smash{{\SetFigFont{12}{14.4}{\rmdefault}{\mddefault}{\updefault}{\color[rgb]{0,0,0}$\Xi$}%
}}}}
\put(1149,-954){\makebox(0,0)[lb]{\smash{{\SetFigFont{12}{14.4}{\rmdefault}{\mddefault}{\updefault}{\color[rgb]{0,0,0}$\Xi$}%
}}}}
\put(3707,-2243){\makebox(0,0)[lb]{\smash{{\SetFigFont{12}{14.4}{\rmdefault}{\mddefault}{\updefault}{\color[rgb]{0,0,0}$\Xi$}%
}}}}
\put(4659,-2236){\makebox(0,0)[lb]{\smash{{\SetFigFont{12}{14.4}{\rmdefault}{\mddefault}{\updefault}{\color[rgb]{0,0,0}$\Xi$}%
}}}}
\put(5529,-2236){\makebox(0,0)[lb]{\smash{{\SetFigFont{12}{14.4}{\rmdefault}{\mddefault}{\updefault}{\color[rgb]{0,0,0}$\Xi$}%
}}}}
\put(6384,-2259){\makebox(0,0)[lb]{\smash{{\SetFigFont{12}{14.4}{\rmdefault}{\mddefault}{\updefault}{\color[rgb]{0,0,0}$\Xi$}%
}}}}
\put(3241,-1659){\makebox(0,0)[lb]{\smash{{\SetFigFont{12}{14.4}{\rmdefault}{\mddefault}{\updefault}{\color[rgb]{0,0,0}$\Phi$}%
}}}}
\end{picture}%
} }} 
\end{equation}
where
\begin{equation}\label{p-form Xi Phi Psi}
\begin{gathered}
\Xi=(\mr{id}-e^{-\epsilon\Delta})\, d^* \Delta^{-1} , \\
\Phi = 
(-1)^{n+\p+1} d*^{-1}\Delta^{-1}(\mr{id}-e^{-\epsilon\Delta}) \, d
,\\ \Psi=(-1)^{n+\p+1}(\mr{id}-e^{-\epsilon \Delta})\Delta^{-1}* .
\end{gathered}
\end{equation}
The chain homotopy  $\til{H}\;\;\rotatebox[origin=c]{90}{$\curvearrowright$} \;\;\til\FF_{M,\Sigma}$ is defined by the same formulae, where now the operators (\ref{p-form Xi Phi Psi}) are understood as acting on forms with admissible singularities at $\Sigma$ constituting the fiber product $\til\FF_{M,\Sigma}$.

\begin{lemma}
Relations (\ref{d_Q i = i d_Q}), (\ref{d_Q p = p d_Q}), (\ref{chain homotopy rel H}), (\ref{chain homotopy rel tilH}) hold.
\end{lemma}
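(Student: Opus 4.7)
Relation \eqref{d_Q i = i d_Q} holds tautologically: $i$ is the canonical inclusion of smooth forms into the fiber product, and the differential $d_Q$ on both sides is assembled from the same $d$ and $*$ on $M$. For \eqref{d_Q p = p d_Q} with $p=e^{-\epsilon\Delta}$, I would invoke the standard fact that on an oriented closed Riemannian manifold the Hodge Laplacian $\Delta=dd^*+d^*d$ commutes with $d$, with $d^*$, and with $*$. Since $d_Q$ is built from $d$ and $*$ (recall that its only ``off-row'' component is the $B\mapsto *B$ piece), one has $[\Delta,d_Q]=0$, and therefore $[e^{-\epsilon\Delta},d_Q]=0$ by the spectral calculus. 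The operator $e^{-\epsilon\Delta}$ is well-defined on the fiber product $\til\FF_{M,\Sigma}$ (whose elements are smooth on each $M_i$ and only as regular across $\Sigma$ as matching of $\pi_1(\phi_1)=\pi_2(\phi_2)$ forces) because the heat semigroup is instantly smoothing on this class of distributional forms.

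\textbf{The chain-homotopy identities} \eqref{chain homotopy rel H}, \eqref{chain homotopy rel tilH} I would verify degree by degree. Away from the ``junction'' between the two rows of $\FF_M$, the differential $d_Q$ reduces to $d$ and the homotopy to $\Xi=(\mr{id}-e^{-\epsilon\Delta})\,d^*\Delta^{-1}$; using $[d,\Delta]=0$ and the Hodge identity $\Delta=dd^*+d^*d$, one computes
\begin{equation*}
d\Xi+\Xi d = (\mr{id}-e^{-\epsilon\Delta})\,\Delta^{-1}(dd^*+d^*d) = \mr{id}-e^{-\epsilon\Delta},
\end{equation*}
with the standard convention that $\Delta^{-1}\Delta=\mr{id}$ off the harmonic subspace (where both sides vanish). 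The right-hand side matches $\mr{id}-pi=\mr{id}-e^{-\epsilon\Delta}$, as required. At the junction, where $d_Q B=dB+*B$ crosses between rows, the diagonal pieces $\Phi$ and $\Psi$ precisely produce the extra terms needed to absorb the $*$-contribution: after inserting $\mr{id}=**^{-1}$ and using $[*,\Delta]=0$, all cross terms reduce once more to the Hodge identity. The signs $(-1)^{n+\p+1}$ in \eqref{p-form Xi Phi Psi} are what is required to make this cancellation work with the sign conventions of $\omega_M$ and the degree shifts $[\p]$, $[n-\p-1]$.

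\textbf{The main obstacle} is not algebraic but functional-analytic: I must check that $\Delta^{-1}$, $d^*$, and $e^{-\epsilon\Delta}$ continue to behave as expected on $\til\FF_{M,\Sigma}$, whose elements carry mild singularities along $\Sigma$. This is handled by noting that $(\mr{id}-e^{-\epsilon\Delta})\Delta^{-1}=\int_0^\epsilon e^{-t\Delta}\,dt$ is a bounded smoothing operator, that $d^*$ is closed on the $L^2$-completion, and that the full compositions defining $\Xi,\Phi,\Psi$ pick up the $\epsilon$-regulator from the heat operator and so extend boundedly to the fiber product. Given these analytic facts, the Hodge-theoretic identities established for smooth forms extend verbatim to $\til\FF_{M,\Sigma}$, yielding \eqref{chain homotopy rel tilH} by the same argument as \eqref{chain homotopy rel H}.
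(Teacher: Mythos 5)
Your proposal is correct and follows essentially the same route as the paper: relation (\ref{d_Q i = i d_Q}) is immediate, and the homotopy identities (\ref{chain homotopy rel H}), (\ref{chain homotopy rel tilH}) are verified degree by degree, with $d\Xi+\Xi d+e^{-\epsilon\Delta}=\mr{id}$ away from the junction and the extra components $\Phi,\Psi$ absorbing the Hodge-star arrow of $d_Q$ at the junction. The paper writes out the junction identities explicitly ($\Xi*+\Psi d=0$, $*\Xi+d\Psi=0$, $\Phi d=d\Phi=0$, $\Xi d+\Phi *+e^{-\epsilon\Delta}=\mr{id}$, $d\Xi+*\Phi+e^{-\epsilon\Delta}=\mr{id}$) where you only assert them, but they do follow from $d^*=\pm *^{-1}d\,*$ and $[\Delta,d]=[\Delta,*]=0$ as you indicate, so this is a matter of detail rather than of substance. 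The one genuine divergence is (\ref{d_Q p = p d_Q}): you prove it directly from $[\Delta,d_Q]=0$ and the spectral calculus, whereas the paper deduces it formally from (\ref{chain homotopy rel tilH}) via $p=\mr{id}-[d_Q,\til{H}]$ and $d_Q^2=0$; both arguments are valid, the paper's having the small advantage of requiring no further analytic input on the fiber product, yours being more self-contained. One minor inaccuracy in your analytic paragraph: $\int_0^\epsilon e^{-t\Delta}\,dt$ equals $(\mr{id}-e^{-\epsilon\Delta})\Delta^{-1}+\epsilon P_{\mr{harm}}$ rather than $(\mr{id}-e^{-\epsilon\Delta})\Delta^{-1}$ with your convention that $\Delta^{-1}$ vanishes on harmonic forms; since the harmonic projector is itself a finite-rank smoothing operator, this does not affect the boundedness and smoothing conclusions you draw from it.
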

\begin{proof}
Relation (\ref{d_Q i = i d_Q}) is obvious. Let us check (\ref{chain homotopy rel H}). On
$\Omega^{i}$ with $i\leq \p-1$ or $i\geq n-\p+1$, we have
$$
d_Q H+H d_Q + p\, i  = d \Xi + \Xi d+ e^{-\epsilon\Delta}= (\mr{id}-e^{-\epsilon \Delta})(dd^*+d^*d)\Delta^{-1}+e^{-\epsilon \Delta}=\mr{id} .
$$
On $\Omega^\p\oplus \Omega^{n-\p-1}$, we have
\begin{multline*}
(d_Q H+H d_Q + p\, i )(A, B) = 
\\
=(\underbrace{(d \Xi +\Xi d +e^{-\epsilon\Delta})}_{\mr{id}}A + \underbrace{(\Xi*+\Psi d)}_{0}B,  \underbrace{(\Xi d+\Phi*+e^{-\epsilon\Delta})}_{\mr{id}}B +\underbrace{\Phi d}_{0} A) \\= (A,B).
\end{multline*}
On $\Omega^{\p+1}\oplus \Omega^{n-\p}$, we have
\begin{multline*}
(d_Q H+H d_Q + p\, i )(B^+, A^+) = \\ =
( \underbrace{(d\Xi+*\Phi+e^{-\epsilon \Delta})}_{\mr{id}}B^+ 
+ \underbrace{(*\Xi+d\Psi)}_0 A^+ ,  \underbrace{(d\Xi+\Xi d+e^{-\epsilon \Delta})}_{\mr{id}} A^+ + \underbrace{d\Phi}_0 B^+) \\= (B^+,A^+).
\end{multline*}
This finishes the proof of (\ref{chain homotopy rel H}). Relation (\ref{chain homotopy rel tilH}) is checked by the same computation (at the level of $\til\Omega^\bullet$).
Chain map property of $p$, (\ref{d_Q p = p d_Q}) in fact follows from (\ref{chain homotopy rel tilH}): $p=\mr{id}-[d_Q, \til{H}]$ implies (since $d_Q^2=0$) ${[d_Q,p]=0}$.
\end{proof}

\begin{remark} One can construct the chain homotopy (\ref{p-form H Hodge}) together with the map (\ref{p-form p Hodge}) from topological quantum mechanics (TQM) in the sense of \cite{Losev} with evolution operator 
$$\mc{U}(t,dt)=e^{-t[d_Q,\G] - dt\, \G }\quad  \in \Omega^\bt(\mathbb{R}_+)\otimes \mr{End}(\F)$$
-- a $(d_t+\mr{ad}_{d_Q})$-closed operator-valued form on $\mathbb{R}_+$, where the degree $-1$ operator $\mathbb{G}$ (the ``non-normalized homotopy'') is chosen to be:
$$\G\colon \vcenter{\hbox{\scalebox{0.85}{ \begin{picture}(0,0)%
\includegraphics{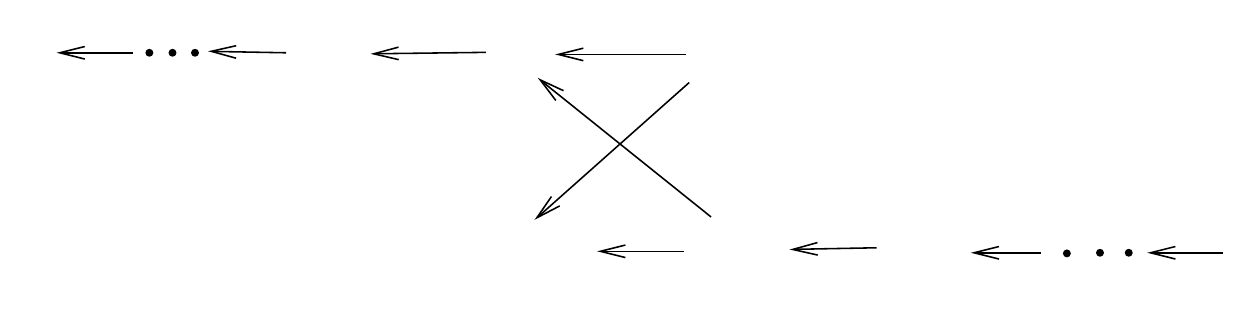}%
\end{picture}%
%
%
\setlength{\unitlength}{3947sp}%
\begingroup\makeatletter\ifx\SetFigFont\undefined%
\gdef\SetFigFont#1#2#3#4#5{%
  \reset@font\fontsize{#1}{#2pt}%
  \fontfamily{#3}\fontseries{#4}\fontshape{#5}%
  \selectfont}%
\fi\endgroup%
\begin{picture}(5940,1556)(691,-2334)
\put(3976,-1666){\makebox(0,0)[lb]{\smash{{\SetFigFont{12}{14.4}{\rmdefault}{\mddefault}{\updefault}{\color[rgb]{0,0,0}$\pm *$}%
}}}}
\put(4017,-1051){\makebox(0,0)[lb]{\smash{{\SetFigFont{12}{14.4}{\rmdefault}{\mddefault}{\updefault}{\color[rgb]{0,0,0}$\Omega^{\p+1}$}%
}}}}
\put(3072,-1051){\makebox(0,0)[lb]{\smash{{\SetFigFont{12}{14.4}{\rmdefault}{\mddefault}{\updefault}{\color[rgb]{0,0,0}$\Omega^\p$}%
}}}}
\put(2127,-1051){\makebox(0,0)[lb]{\smash{{\SetFigFont{12}{14.4}{\rmdefault}{\mddefault}{\updefault}{\color[rgb]{0,0,0}$\Omega^{\p-1}$}%
}}}}
\put(706,-1044){\makebox(0,0)[lb]{\smash{{\SetFigFont{12}{14.4}{\rmdefault}{\mddefault}{\updefault}{\color[rgb]{0,0,0}$\Omega^0$}%
}}}}
\put(3072,-1996){\makebox(0,0)[lb]{\smash{{\SetFigFont{12}{14.4}{\rmdefault}{\mddefault}{\updefault}{\color[rgb]{0,0,0}$\Omega^{n-\p-1}$}%
}}}}
\put(4017,-1996){\makebox(0,0)[lb]{\smash{{\SetFigFont{12}{14.4}{\rmdefault}{\mddefault}{\updefault}{\color[rgb]{0,0,0}$\Omega^{n-\p}$}%
}}}}
\put(4962,-1996){\makebox(0,0)[lb]{\smash{{\SetFigFont{12}{14.4}{\rmdefault}{\mddefault}{\updefault}{\color[rgb]{0,0,0}$\Omega^{n-\p+1}$}%
}}}}
\put(6616,-1996){\makebox(0,0)[lb]{\smash{{\SetFigFont{12}{14.4}{\rmdefault}{\mddefault}{\updefault}{\color[rgb]{0,0,0}$\Omega^n$}%
}}}}
\put(3609,-961){\makebox(0,0)[lb]{\smash{{\SetFigFont{12}{14.4}{\rmdefault}{\mddefault}{\updefault}{\color[rgb]{0,0,0}$d^*$}%
}}}}
\put(2717,-954){\makebox(0,0)[lb]{\smash{{\SetFigFont{12}{14.4}{\rmdefault}{\mddefault}{\updefault}{\color[rgb]{0,0,0}$d^*$}%
}}}}
\put(1877,-954){\makebox(0,0)[lb]{\smash{{\SetFigFont{12}{14.4}{\rmdefault}{\mddefault}{\updefault}{\color[rgb]{0,0,0}$d^*$}%
}}}}
\put(1149,-954){\makebox(0,0)[lb]{\smash{{\SetFigFont{12}{14.4}{\rmdefault}{\mddefault}{\updefault}{\color[rgb]{0,0,0}$d^*$}%
}}}}
\put(3707,-2243){\makebox(0,0)[lb]{\smash{{\SetFigFont{12}{14.4}{\rmdefault}{\mddefault}{\updefault}{\color[rgb]{0,0,0}$d^*$}%
}}}}
\put(4659,-2236){\makebox(0,0)[lb]{\smash{{\SetFigFont{12}{14.4}{\rmdefault}{\mddefault}{\updefault}{\color[rgb]{0,0,0}$d^*$}%
}}}}
\put(5529,-2236){\makebox(0,0)[lb]{\smash{{\SetFigFont{12}{14.4}{\rmdefault}{\mddefault}{\updefault}{\color[rgb]{0,0,0}$d^*$}%
}}}}
\put(6384,-2259){\makebox(0,0)[lb]{\smash{{\SetFigFont{12}{14.4}{\rmdefault}{\mddefault}{\updefault}{\color[rgb]{0,0,0}$d^*$}%
}}}}
\put(3241,-1659){\rotatebox{40.0}{\makebox(0,0)[lb]{\smash{{\SetFigFont{12}{14.4}{\rmdefault}{\mddefault}{\updefault}{\color[rgb]{0,0,0}$\pm d*d$}%
}}}}}
\end{picture}%
 }}} $$
Note that with this choice the Hamiltonian of the TQM $\HHH\colon=[d_Q,\G]$ is just the Laplace operator $\Delta$.
Then, one has $$p\,i=\mc{U}|_{t=\epsilon}\quad = e^{-t\HHH}$$ and 
$$H=-\int_0^\epsilon \mc{U}\quad = \G\HHH^{-1}(1-e^{-\epsilon \HHH})$$
 (where we integrate the 1-form component of $\mc{U}$ along $\mathbb{R}_+$).  The relation (\ref{chain homotopy rel H}) follows by Stokes' theorem for the integral over $[0,\epsilon]$, using closedness of $\mc{U}$.

We remark that $\G$ does not square to zero.

\end{remark}

\subsection{Construction 2: homotopy local near the interface}
\label{ss: p-form 1st order local homotopy}
As in Section \ref{ss: scalar field local homotopy}, fix an open neighborhood $U\subset M$ of $\Sigma$. Fix a function $\mu$ on $M\times M$ as in the proof of Lemma \ref{lemma 4.2}. Let $G\in\bigoplus_{i=0}^n\Omega^{i,n-i}_{C^0}(U\times U)$ be the integral kernel of the inverse of the Laplace operator $\Delta_U$ acting on forms on $U$ with Dirichlet boundary conditions at $\dd U$; $G$ is smooth on $U\times U$ away from the diagonal, where it is only continuous (hence the subscript $C^0$, for continuous forms). 

Let $\Gamma\in \bigoplus_{i=0}^n\Omega^{i,n-i}_{C^0}(M\times M)$ be the form on $M\times M$ defined as
$$ \Gamma= 
\left\{ 
\begin{array}{ccc}
\mu\cdot G & \mr{in} & U\times U, \\
0 & \mr{outside} & U\times U
\end{array}
\right.
 $$
Let $\Delta^{-1}_\mu\colon \Omega^\bt(M)\ra \Omega^\bt(M)$ be the operator with integral kernel $\Gamma$.\footnote{We are thinking of it as a modification of $\Delta_U^{-1}$ provided by the bump function $\mu$, hence the notation; we are not thinking of $\Delta_\mu^{-1}$ as an inverse of some operator $\Delta_\mu$.} We remark that $\Delta^{-1}_\mu$ commutes with the Hodge star (since $*_1\Gamma=\mu *_1G=\mu *_2 G=*_2\Gamma$ where $*_1G=*_2G$ follows from the fact that $\Delta_U$ -- and its inverse -- commutes with $*$), but generally does not commute with $d$.


We will first define the chain homotopy operators and then, using them, construct the map $p\colon \til\FF_{M,\Sigma}\ra \FF_M$.

We define the chain homotopy $H\;\;\rotatebox[origin=c]{90}{$\curvearrowright$} \;\;\FF_M$ as
\begin{equation}\label{p-form H}
\vcenter{\hbox{ 
\scalebox{0.85}{
\begin{picture}(0,0)%
\includegraphics{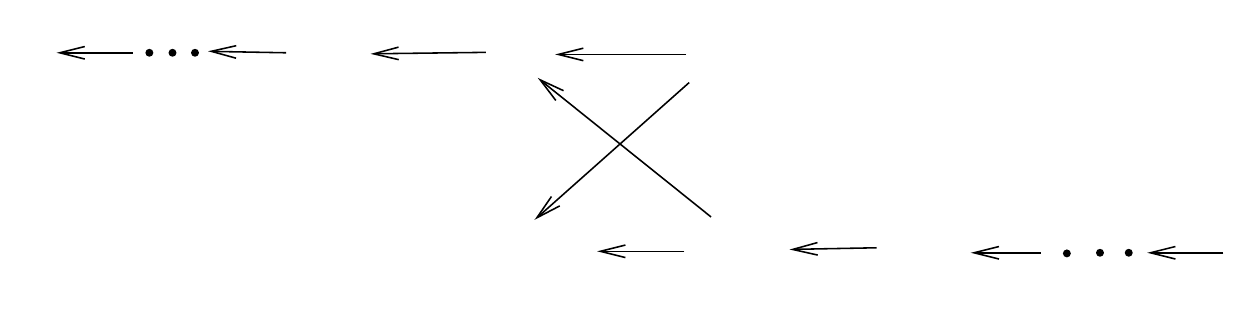}%
\end{picture}%
%
%
\setlength{\unitlength}{3947sp}%
\begingroup\makeatletter\ifx\SetFigFont\undefined%
\gdef\SetFigFont#1#2#3#4#5{%
  \reset@font\fontsize{#1}{#2pt}%
  \fontfamily{#3}\fontseries{#4}\fontshape{#5}%
  \selectfont}%
\fi\endgroup%
\begin{picture}(5940,1556)(691,-2334)
\put(3976,-1666){\makebox(0,0)[lb]{\smash{{\SetFigFont{12}{14.4}{\rmdefault}{\mddefault}{\updefault}{\color[rgb]{0,0,0}$\Psi$}%
}}}}
\put(4017,-1051){\makebox(0,0)[lb]{\smash{{\SetFigFont{12}{14.4}{\rmdefault}{\mddefault}{\updefault}{\color[rgb]{0,0,0}$\Omega^{\p+1}$}%
}}}}
\put(3072,-1051){\makebox(0,0)[lb]{\smash{{\SetFigFont{12}{14.4}{\rmdefault}{\mddefault}{\updefault}{\color[rgb]{0,0,0}$\Omega^\p$}%
}}}}
\put(2127,-1051){\makebox(0,0)[lb]{\smash{{\SetFigFont{12}{14.4}{\rmdefault}{\mddefault}{\updefault}{\color[rgb]{0,0,0}$\Omega^{\p-1}$}%
}}}}
\put(706,-1044){\makebox(0,0)[lb]{\smash{{\SetFigFont{12}{14.4}{\rmdefault}{\mddefault}{\updefault}{\color[rgb]{0,0,0}$\Omega^0$}%
}}}}
\put(3072,-1996){\makebox(0,0)[lb]{\smash{{\SetFigFont{12}{14.4}{\rmdefault}{\mddefault}{\updefault}{\color[rgb]{0,0,0}$\Omega^{n-\p-1}$}%
}}}}
\put(4017,-1996){\makebox(0,0)[lb]{\smash{{\SetFigFont{12}{14.4}{\rmdefault}{\mddefault}{\updefault}{\color[rgb]{0,0,0}$\Omega^{n-\p}$}%
}}}}
\put(4962,-1996){\makebox(0,0)[lb]{\smash{{\SetFigFont{12}{14.4}{\rmdefault}{\mddefault}{\updefault}{\color[rgb]{0,0,0}$\Omega^{n-\p+1}$}%
}}}}
\put(6616,-1996){\makebox(0,0)[lb]{\smash{{\SetFigFont{12}{14.4}{\rmdefault}{\mddefault}{\updefault}{\color[rgb]{0,0,0}$\Omega^n$}%
}}}}
\put(3609,-961){\makebox(0,0)[lb]{\smash{{\SetFigFont{12}{14.4}{\rmdefault}{\mddefault}{\updefault}{\color[rgb]{0,0,0}$\Xi$}%
}}}}
\put(2717,-954){\makebox(0,0)[lb]{\smash{{\SetFigFont{12}{14.4}{\rmdefault}{\mddefault}{\updefault}{\color[rgb]{0,0,0}$\Xi$}%
}}}}
\put(1877,-954){\makebox(0,0)[lb]{\smash{{\SetFigFont{12}{14.4}{\rmdefault}{\mddefault}{\updefault}{\color[rgb]{0,0,0}$\Xi$}%
}}}}
\put(1149,-954){\makebox(0,0)[lb]{\smash{{\SetFigFont{12}{14.4}{\rmdefault}{\mddefault}{\updefault}{\color[rgb]{0,0,0}$\Xi$}%
}}}}
\put(3707,-2243){\makebox(0,0)[lb]{\smash{{\SetFigFont{12}{14.4}{\rmdefault}{\mddefault}{\updefault}{\color[rgb]{0,0,0}$\Xi'$}%
}}}}
\put(4659,-2236){\makebox(0,0)[lb]{\smash{{\SetFigFont{12}{14.4}{\rmdefault}{\mddefault}{\updefault}{\color[rgb]{0,0,0}$\Xi'$}%
}}}}
\put(5529,-2236){\makebox(0,0)[lb]{\smash{{\SetFigFont{12}{14.4}{\rmdefault}{\mddefault}{\updefault}{\color[rgb]{0,0,0}$\Xi'$}%
}}}}
\put(6384,-2259){\makebox(0,0)[lb]{\smash{{\SetFigFont{12}{14.4}{\rmdefault}{\mddefault}{\updefault}{\color[rgb]{0,0,0}$\Xi'$}%
}}}}
\put(3241,-1659){\makebox(0,0)[lb]{\smash{{\SetFigFont{12}{14.4}{\rmdefault}{\mddefault}{\updefault}{\color[rgb]{0,0,0}$\Phi$}%
}}}}
\end{picture}%
 
}
}} 
\end{equation}
where
\begin{equation}\label{components of H local}
\begin{gathered}
\Xi=\Delta_\mu^{-1} d^* ,\; \Xi'=d^*\Delta_\mu^{-1},\\ 
\Phi=
(-1)^{n+\p+1} d*^{-1}\Delta_\mu^{-1} d
,\; \Psi=(-1)^{n+\p+1}*\Delta_\mu^{-1} .
\end{gathered}
\end{equation}

The chain homotopy $\til{H}\;\;\rotatebox[origin=c]{90}{$\curvearrowright$} \;\;\til\FF_{M,\Sigma}$ is defined by  the operators with the same integral kernels as in $H$, now acting on forms constituting $\til\FF_{M,\Sigma}$.

We construct the map $p\colon \til\FF_{M,\Sigma}\ra \FF_M$ as 
$p=\mr{id}-d_Q \til{H}-\til{H} d_Q$ (here the important point is that the image of $p$ is in smooth forms $\FF_M\subset \til\FF_{M,\Sigma}$). Explicitly, $p$ looks as follows:
\begin{equation}\label{components of p, p-form theory}
p=\left\{
\begin{array}{ccccc}
\mr{id}-d\Delta_\mu^{-1}d^*- \Delta_{\mu}^{-1}d^*d & \mr{on} & \til\Omega^i & \mr{with} & i\leq \p,\\
\mr{id}-dd^*\Delta_\mu^{-1}- d^*\Delta_{\mu}^{-1}d & \mr{on} & \til\Omega^i & \mr{with} & i\geq n-\p, \\
\mr{id}-d\Delta_\mu^{-1}d^*- d^*\Delta_{\mu}^{-1}d & \mr{on} & \til\Omega^{i} & \mr{with}  & i\in \{\p+1,n-\p-1\} 
\end{array}
\right.
\end{equation}
A note on why $\Phi$ and $\Psi$ components of the homotopy are crucial for the construction to work: here for the case $i=n-\p$, it is important that maps (\ref{components of H local}) satisfy $*\Xi'+d\Psi=0$; for $i=n-\p-1$, it is important that $\Xi *+\Psi d=0$; for $i=\p$ and $i=\p+1$, it is important that $d\Phi=\Phi d=0$. If not for these properties, $p$ would not be diagonal w.r.t. components of $\FF_M$, e.g., $p$ acting on $\til\Omega^\p$ could have a component  in $\til\Omega^{n-\p-1}$. Even more importantly, these off-diagonal components of $p$ would fail to map to smooth forms (e.g. if one would try to set $\Phi$ or $\Psi$ to zero).

The integral kernel of each component of $p$ is:
\begin{itemize}
\item zero in $V\subset U\times U$ -- a neighborhood of $\mr{Diag}_U$ where $\mu=1$ (see the proof of Lemma \ref{lemma 4.2}),\footnote{Indeed, in $V$ the kernels of $\Delta^{-1}_\mu$ and of the true Green function $\Delta^{-1}_U$ coincide and $\Delta^{-1}_U$ commutes both with $d$ and with $*$, thus, by inspecting (\ref{components of p, p-form theory}), the kernel of each component vanishes in $V$.}
\item smooth in $U\times U$ (smoothness in $U\times U\backslash V$ is obvious and in $V$ the kernel of $p$ is smooth by the previous point), 
\item $\delta_\mr{Diag}$ outside $U\times U$.
\end{itemize}
These properties in particular imply the claim made above, that $p$ takes forms in $\til{\FF}_{M,\Sigma}$ in smooth forms in $\FF_M$.

This finishes the proof of Theorem \ref{thm main} in the present example.

\begin{remark} One can construct an $(i,p,H,\til{H})$-package for abelian Chern--Simons (or abelian $BF$) -- as an alternative to the construction of Section \ref{s: ex - abCS} -- from the formulae above:
\begin{itemize}
\item (Hodge-theoretic version.) One can set $p=e^{-\epsilon \Delta}$ and homotopies $H$ and $\til{H}$ to be given by the operator $\Xi={(\mr{id}-e^{-\epsilon\Delta})\, d^* \Delta^{-1}}$ from (\ref{p-form Xi Phi Psi}).
\item (Local-near-$\Sigma$ version.) One can set $H$ and $\til{H}$ to be given by $\Xi=\Delta_\mu^{-1} d^*$ from (\ref{components of H local}) and $p=\mr{id}
-d\Xi-\Xi d$.\footnote{Alternatively, one may replace $\Xi=\Delta_\mu^{-1} d^*$ with $\Xi'=d^*\Delta_\mu^{-1}$ from (\ref{components of H local}) in this construction.}
\end{itemize}
\end{remark}

\subsection{Second-order formalism}\label{ss: p-form 2nd order}
The same model---$\p$-form electrodynamics---cast in the second-order formalism has BV action 
$$S_M=\int_M \frac12 dA\wedge *dA + A^+\wedge dc_1+\sum_{k=2}^p c^+_{k-1}\wedge dc_k . $$
The Poincar\'e complex of fields $\FF_M$ is\footnote{\label{footnote: d*d sign in second order p-form theory}
A note on signs: to explain why we write $d*^{-1}d$ rather than $d*d$ as a term in $d_Q$ (and also why $*^{-1}dA$ appears in $\pi$ rather than $*dA$), one needs to look at the full BV-BFV package for the second-order $\p$-form theory. With $S_M$ as above, $\omega_M=(-1)^n\int_M \delta \bar{\mc{B}}\wedge \delta\bar{\mc{A}}$, $Q_M=\int_M d\bar{\mc{A}}\frac{\delta}{\delta\bar{\mc{A}}}+(d\bar{\mc{B}}+d*^{-1}d A)\frac{\delta}{\delta\bar{\mc{B}}}$ and $\alpha_{\dd M}=(-1)^{n-1}\int_{\dd M}\bar{\mc{B}}\wedge\delta \bar{\mc{A}}+(*^{-1}d A)\wedge\delta A$, one has the main BV-BFV relation (\ref{BV-BFV eq}). Here we denoted $\bar{\mc{A}}=c_\p+\cdots+c_1+A$, $\bar{\mc{B}}=A^++c_1^++\cdots + c_\p^+$ the truncated bulk superfields.
} 
$$
\underset{c_\p}{\Omega^0}\xra{d}\cdots \xra{d}\underset{c_1}{\Omega^{\p-1}}\xra{d} \underset{A}{\Omega^\p} \xra{d*^{-1}d} \underset{A^+}{\Omega^{n-\p}}\xra{d} \underset{c_1^+}{\Omega^{n-\p+1}}\xra{d}\cdots \xra{d} \underset{c_\p^+}{\Omega^n}  .
$$
The phase space $\Phi_\Sigma$ is the same as above, (\ref{p-form theory phase space}); the projection $\FF_{M_{1,2}}\ra \Phi_\Sigma$ is the pullback of forms to $\Sigma$, except for $\p$-forms where $\pi(A)=(A|_\Sigma, (*^{-1}dA)|_\Sigma)$.

We again have the canonical inclusion $i\colon \FF_M\hra \til\FF_{M,\Sigma}={\FF_{M_1}\times_{\Phi_\Sigma}\FF_{M_2}}$ 
and we again have two variants of the rest of the $(i,p,H,\til{H})$ package.

\textbf{Hodge-theoretic version.} We set $p=e^{-\epsilon \Delta}\colon \til\FF_{M,\Sigma}\ra \FF_M$ and we set the chain homotopy $H$ to be
\begin{equation*}\label{p-form 2nd order H}
H:\qquad \Omega^0\xla{\Xi}\cdots \xla{\Xi}\Omega^{\p-1}\xla{\Xi} \Omega^\p \xla{-\Psi} \Omega^{n-\p}\xla{\Xi} \Omega^{n-\p+1}\xla{\Xi}\cdots \xla{\Xi} \Omega^n,
\end{equation*}
with $\Psi$, $\Xi$ as in (\ref{p-form Xi Phi Psi}). Likewise, we set $\til{H}$ to be given by the same operators (i.e. defined by the same integral kernels) acting on forms in $\til\FF_{M,\Sigma}$.

\textbf{Local-near-$\Sigma$ version.} We set $H$ to be given by 
$$
H:\qquad \Omega^0\xla{\Xi}\cdots \xla{\Xi}\Omega^{\p-1}\xla{\Xi} \Omega^\p \xla{-\Psi} \Omega^{n-\p}\xla{\Xi'} \Omega^{n-\p+1}\xla{\Xi'}\cdots \xla{\Xi'} \Omega^n ,
$$
where now the component operators $\Psi,\Xi,\Xi'$ are as in (\ref{components of H local}); $\til{H}$ is again given by the same operators acting on $\til\Omega$-forms. We define $p$ by formulae (\ref{components of p, p-form theory}) where we forget the components in $\Omega^i$ with $i=\p+1$, $i=n-\p-1$, as those are absent (integrated out) in the second-order formulation of the $\p$-form theory.

In both versions, items (\ref{thm (a)}), (\ref{thm (b)}) of Theorem \ref{thm main} hold.  
Second version additionally satisfies item (\ref{thm (c)})  -- locality.

%

\end{document}